\pgfplotsset{compat=newest} 
\DeclareMathOperator*{\E}{\mathbb{E}}
\DeclareMathOperator*{\R}{\mathbb{R}}
\DeclareMathOperator*{\prob}{\text{I\kern-0.15em P}}
\DeclarePairedDelimiterX\set[1]\lbrace\rbrace{#1}
\DeclareMathOperator*{\argmax}{argmax}
\DeclareMathOperator*{\supp}{\operatorname{supp}}
\DeclareMathOperator*{\us}{s}
\newtheorem{theorem}{Theorem}
\newtheorem{definition}{Definition}
\newtheorem{lemma}{Lemma}
\newtheorem{proposition}{Proposition}
\newtheorem{corollary}{Corollary}
\newtheorem{example}{Example}
\newtheorem{remark}{Remark}
\title{Robust Price Discrimination}
\author{
    Itai Arieli\thanks{%
    University of Toronto and Technion - Israel Institute of Technology,
    iarieli@technion.ac.il},
    Yakov Babichenko\thanks{%
    Technion - Israel Institute of Technology,
    yakovbab@technion.ac.il},
    Omer Madmon\thanks{%
    Technion - Israel Institute of Technology,
    omermadmon@campus.technion.ac.il},
    Moshe Tennenholtz\thanks{%
    Technion - Israel Institute of Technology,
    moshet@technion.ac.il}
}
\begin{document}
\maketitle

\begin{abstract}
We consider a model of third-degree price discrimination where the seller's product valuation is unknown to the market designer, who aims to maximize buyer surplus by revealing buyer valuation information. Our main result shows that the regret is bounded by a $\frac{1}{e}$-fraction of the optimal buyer surplus when the seller has zero valuation for the product. This bound is attained by randomly drawing a seller valuation and applying the segmentation of Bergemann et al. (2015) with respect to the drawn valuation. We show that this bound is tight in the case of binary buyer valuation.
\end{abstract}

\section{Introduction}
\label{intro_section}
The celebrated paper of \cite{bergemann2015} considers a setting in which a product is sold via the posted price mechanism. The interaction involves three agents. A \emph{buyer} whose value for the product is drawn according to a commonly known distribution $b\sim \mu \in \Delta(\mathbb{R}_+)$, supported on $n$ buyer types. A \emph{seller} whose value for the product is $\us$ (e.g., the seller has an outside option of selling the product for a price of $\us$). The third entity is a \emph{market designer} (a \emph{designer} for short) who knows the buyer's value for the product $b$ and is allowed to credibly reveal information about $b$ to the seller. 

\cite{bergemann2015} characterize the possible (buyer, seller) utility profiles that may arise in the above interaction under some revelation policy of the designer (i.e., under some \emph{market segmentation}). Arguably, the most interesting case arises when the designer's objectives are aligned with those of the buyer; namely when the designer tries to maximize the (ex-ante) buyer's surplus. Surprisingly, \cite{bergemann2015} shows that a careful choice of market segmentation might yield the entire surplus to the buyer subject to the obvious constraint that the seller must get at least her no-information surplus (because the seller can ignore the designer's information).

This very elegant result relies, however, on the assumption that the designer knows the seller's value for the product $\us$ (in such a case, $\us$ can be normalized to $\us=0$). 
In many natural scenarios, the knowledge of $\us$ (e.g., the outside options of the seller) by the designer might be a demanding requirement. 
To motivate the need for relaxing this assumption, one might think of the designer entity as a purchasing group or a buyers club. The purchasing group perfectly knows the valuations of its members. However, once a seller arrives, the group might have uncertainties about the seller's outside option. Moreover, notice that in this example, the objective is indeed buyer surplus maximization.

In this paper, our goal is to explore this interaction in the case where $\us$ is unknown to the designer. While \cite{bergemann2015} is an information design problem in a setting of trade with incomplete information on one side (i.e., only buyer valuation is private information) our setting can be viewed as a first step towards understanding information design in the classical bilateral trade setting (see \citealp{myerson1983efficient}) in which both buyer and seller valuations are private information.\footnote{A recent paper by \cite{shota2023buyer} also adopts an information design approach to bilateral trade. In \cite{shota2023buyer}, the buyer does not know her value for the product but, instead, receives a recommendation on whether to adopt the product from a designer.}

A natural approach in this setting is the robust perspective, where the designer aims to come up with a market segmentation that performs well \emph{for all} $\us$. The robust approach has been adopted to various economic setups, such as auctions (\citealp{bose2006optimal}), monopoly pricing (\citealp{bergemann2011robust,wong2020optimal}), principal-agent (\citealp{carroll2015robustness,walton2022general}) and persuasion (\citealp{dworczak2022preparing,babichenko2022regret}).

There are two leading branches of the robust paradigm. The first one is the max-min paradigm; the market segmentation should yield a high utility for the buyer for all $\us$. This approach is not very insightful. An adversarial choice of $\us$ that is above all elements in $\supp (\mu)$ leads to no trade, and hence, to a zero surplus for the buyer. The second robustness paradigm is regret; the market segmentation should yield a utility as close as possible to the hypothetical scenario in which $\us$ was known to the designer.\footnote{Previous work demonstrates how the robust approach
can be used to provide insights in realistic economic settings. \cite{guo2023regret} used the regret-minimization framework to provide an intuition for why
agents should be incentivized to propose multiple projects in a setup in which a principal-agent project selection setup. 
\cite{guo2025robust} shows that in the context of monopoly regulation, regret-minimizing policies can be simpler and more transparent than those relying on policymaker knowledge.
\cite{arieli2025random} justifies the random dictator mechanism for forecast aggregation, by showing that the mechanism is universally optimal across various robustness notions. These results motivate applying regret-minimization to price discrimination as a path to robust and interpretable segmentation.}

From a technical point of view, finding a regret-minimizing segmentation is equivalent to solving a zero-sum game between the designer (who chooses a segmentation) and an adversary (who chooses $\us$). This game is difficult to solve analytically since the strategy space of the designer is of infinite dimension.\footnote{Notice that, unlike in the standard model of \cite{bergemann2015}, the number of segments on which the optimal segmentation is supported can be unbounded. In fact, we will show that the regret-minimizing segmentation indeed uses infinitely many signals for binary valuations.}
Nevertheless, we are able to provide valuable insights--in particular, we introduce a simple, intuitive, and tractable approach to constructing a robust market segmentation with non-trivial robustness guarantees.

Our main result (Theorem \ref{main_res_theorem}) shows that for every $\mu$ there exists a market segmentation that ensures regret of $U^*(0)/e$, where $U^*(0)$ is the buyer's surplus in the case where $\us=0$ and is known to the designer. That is, the difference between the performance of our ignorant designer and that of the hypothetical designer who knows $\us$ will never exceed $U^*(0)/e$. This market segmentation is the regret-minimizing one whenever the buyer type is binary, i.e., $n=2$ (Theorem \ref{main_res_bintype_theorem}). The market segmentation that achieves the regret bound is quite intuitive: The designer does not know $\us$. A simple idea is to "guess $\us$" according to a carefully chosen distribution $s_D \sim g \in \Delta(\mathbb{R}_+)$, and thereafter to create a segmentation that is optimal for the case $\us=s_D$ exactly as described in \cite{bergemann2015}. This two-step procedure creates yet another market segmentation and we call the class of these segmentations \emph{BBM segmentations}. Interestingly, it turns out that the density function of the optimal BBM strategy can be expressed as the \emph{hazard function} of the survival function corresponding to the optimal buyer surplus function, providing an intuitive interpretation of the robust segmentation strategy.

Besides the theoretical results, we also present empirical evidence for the effectiveness of our BBM segmentation. Empirical studies demonstrate that realistic valuations fit best Lognormal distributions or Pareto distributions (see \citealp{coad2009distribution}).
We show that in an experimental setting in which buyer and seller valuations are drawn independently from a shared distribution (which can be either Lognormal or Pareto), the actual expected difference between the optimal surplus and the surplus achieved by our segmentation is even lower than the theoretical bound we prove in Theorem \ref{main_res_theorem}.

\paragraph{Techniques} The idea behind the proof of the main results (Theorems \ref{main_res_theorem} and \ref{main_res_bintype_theorem}) is as follows. 
While solving the zero-sum game between the designer and the adversary is involved (due to the high-dimensionality of the designer's strategy space), it can be simplified by restricting the designer to BBM segmentations. In this case, the strategy space of the designer turns to a single-dimensional one (just the choice of $s_D \in \mathbb{R}_{+}$). It turns out that the utilities in this zero-sum game (as a function of $\us$ and $s_D$) have relatively clean expressions, but are not clean enough to be able to be solved explicitly. We bound from above the utilities in the zero-sum game by even cleaner expressions. The latter formula is so clean that we are able to perform the entire equilibrium analysis of this game (see Lemma \ref{main_tech_lemma}). These arguments are sufficient to deduce Theorem \ref{main_res_theorem}. 

For the proof of Theorem \ref{main_res_bintype_theorem} we observe that along the proof of Theorem \ref{main_res_theorem} we have made two relaxations. First, we restricted the designer to BBM segmentations. Second, we have bounded from above the utilities of the actual zero-sum game. It turns out that the second relaxation is not actually a relaxation in the case of $n=2$. It is the actual zero-sum game. Therefore, a natural attempt would be to focus on the mixed strategy of the adversary in this zero-sum game. Now we allow the designer to use all the segmentations (not only the BBM ones) and we show that even if her strategy space is richer (i.e., not only BBM segmentations) she still cannot gain more than the value of the zero-sum game. Such an analysis is tractable because in the case of $n=2$, the designer's best reply problem boils down to a concavification of a single-dimensional function. This somewhat surprising observation is sufficient to deduce Theorem \ref{main_res_bintype_theorem}. The observation is somewhat surprising because there exist mixed strategies of the adversary for which all best replies of the designer do not belong to the BBM segmentations class (see Example \ref{example2}). The specific mixed strategy that is optimal for the zero-sum game turns out to have BBM segmentation best-reply.
Since establishing the regret lower bound for $n=2$ relies on the concavification approach, we tend to believe that providing such a lower bound for the general case is difficult, and requires a different set of techniques. We therefore leave this interesting open problem outside the scope of this paper.

\paragraph{Paper structure} 
Section \ref{model_section} introduces the price discrimination model of \cite{bergemann2015}, which assumes a known seller valuation. 
We then introduce and analyze our incomplete information model in Section \ref{robust_subsection}. We state and prove our main result, which is an upper bound on the overall regret using our BBM segmentation (Theorem \ref{main_res_theorem}).
We then show in Section \ref{binary_type_section} that this bound is tight for the binary buyer type case (Theorem \ref{main_res_bintype_theorem}).
Section \ref{expriments_section} introduces our experimental results in which we evaluate the performance of our approach compared to the optimal benchmark for a specific class of distributions that are considered as reflecting the actual distribution of valuations in realistic settings. 
In Section \ref{sec:extensions}, we extend our regret upper bound to two settings:
(i) the case where the designer knows that the seller’s valuation lies within a given interval (Subsection \ref{subsec: restricted_seller_val}), and (ii) the case of weighted generalized regret (Subsection \ref{subsec:generalized_regret}).
We then conclude in Section \ref{conclusions_section}. Proofs of all technical lemmas are deferred to the appendix.

\subsection{Related Work}
\label{related_work_section}

\paragraph{Third-degree price discrimination} A fundamental economic question is how third-degree price discrimination affects consumer surplus, producer surplus, and social welfare (see, e.g., the classic work of \citealp{pigou2017economics}).\footnote{In "third-degree" price discrimination, the market designer divides the market into separate segments, where the seller may charge different prices in each segment. In contrast, "first-degree" price discrimination refers to a situation in which the seller is fully informed of the buyer's value (hence charges this value as the price), and in "second-degree" price discrimination the seller sells goods that are similar, but may vary in quality, at different prices. Throughout the paper, we use the term 'price discrimination' to describe a model of third-degree price discrimination.} Our work extends the work of \cite{bergemann2015}, in which the price discrimination model is studied from a buyer surplus maximization perspective. \cite{bergemann2015} introduced an algorithm for finding a buyer surplus maximizing market segmentation, which can be computed efficiently. 
Several works have then extended the standard model, and provided either exact or approximate buyer-optimal segmentation under different assumptions (\citealp{shen2018closed_pd1}, \citealp{cai2020third_pd2}, \citealp{mao2021interactive_pd3}, \citealp{bergemann2022calibrated_pd4}, \citealp{alijani2022limits_pd5}, \citealp{ko2022optimal_pd6},
\citealp{bergemann2024unified}).

While maximizing surplus often yields an unfair outcome for the buyers, an alternative promising line of research focuses on \emph{fair price discrimination} (\citealp{flammini2021fair_fpd1}, \citealp{cohen2022price_fpd2}). In particular, \cite{banerjee2023fair_fpd3} prove the existence of a segmentation, different than the one of \cite{bergemann2015}, that simultaneously approximates a large set of welfare functions (including utilitarian welfare, Nash welfare and the min-max welfare). Their approach can be viewed as another notion of robustness other than regret-minimization, since the segmentation is robust to the actual welfare function.

Closer to our work, \cite{cummings2020algorithmic} analyzed several variations of the price discrimination model, in which the market designer only has a noisy signal about the \emph{buyer's} valuation. Our work completes the picture by studying the case in which the uncertainty of the designer is with respect to the \emph{seller's} valuation.

\paragraph{Robust Bayesian persuasion} As we discuss in the paper, the model of price discrimination closely relates to the Bayesian persuasion model introduced by \cite{kamenica2011bayesian}. Bayesian persuasion refers to a situation in which an informed sender aims to influence the decision of an uninformed receiver by designing a signaling scheme. One rigid assumption required in the standard model is that the sender knows the receiver's type (i.e., utility function), and uses this information to construct an optimal signaling policy. Several works took different approaches to relax this restricting assumption: \cite{arieli2023reputationbased} took a natural \emph{Bayesian} approach, meaning the receiver is sampled from a commonly known prior distribution; \cite{dworczak2022preparing} took a \emph{minmax} approach, which measures the absolute performance of a signaling scheme that does not rely on knowing the receiver's type; \cite{castiglioni2020online} considered an \emph{online learning} framework, in which the sender repeatedly faces an adversarially-chosen receiver whose type is unknown, and receives either a full-information or partial-information feedback. 

Closest to our work, \cite{babichenko2022regret} studied a \emph{regret-minimizing} Bayesian persuasion model, in which the performance of a signaling policy is determined according to the worst-case difference between the optimal utility the sender could obtain \emph{had she known} the receiver's type, and the actual utility she obtains without having access to this information. 
The class of utilities that are studied in \cite{babichenko2022regret} is different from the one that arises in a price discrimination model. \cite{babichenko2022regret} consider a receiver with binary decisions and state-independent utilities for the sender. In our case, the number of actions for the seller (receiver) is $n$. Moreover, the sender's utility is not state-independent. Even in the binary valuation case (i.e., $n=2$) the class of the sender's indirect utilities in our case might have a structure that is much more complex than the threshold structure of state-independent utilities; see Example \ref{example2}. The results of \cite{babichenko2022regret} indicate that the sender can guarantee low regret whenever the receiver's utility is monotonic in the state. Our results provide another instance where low regret can be guaranteed.

\section{Preliminaries: Known Value of the Seller}
\label{model_section}
Before introducing the case in which the seller's value is unknown, we briefly discuss the model of  \cite{bergemann2015} in which the seller's value $\us\in \R_{+}$ is known by the market designer. 
Let $B \coloneqq \set{b_1, ... b_n} \subset \R_{+}$ be the buyers' valuations.
We assume that $0 < b_1 < ... < b_n$.
Let $\mu \in int(\Delta(B))$ be the prior buyer distribution.\footnote{For any set $A$, we denote by $int(A)$ the interior of $A$, and $\Delta(A)$ is the set of all probability distributions over $A$.} A segmentation of the market designer is a Bayes plausible posterior distribution $\sigma$, i.e., the set of all possible segmentation is given by:

\begin{equation*}
    \Sigma \coloneqq \set{\sigma \in \Delta(\Delta(B)) | \E_{p \sim \sigma}[p] = \mu}
\end{equation*}

We assume that when indifferent, the seller sets the lowest price. Also, when the buyer is indifferent (i.e., the price equals its valuation) the buyer buys the product. Thus, for a given posterior $p \in \Delta(B)$, the seller's price is given by:

\begin{equation*}
    \pi(p;\us) \coloneqq \min{\argmax_{b_i \in B} (b_i - \us) \cdot {\sum_{j=i}^n p_j}}
\end{equation*}
We denote by $b_{i^*}$ the \emph{monopolistic price} that is the price that will be set without any information; i.e., $b_{i^*} = \pi(\mu;\us)$.
We consider a market designer who aims to maximize the buyer's surplus. For a given posterior $p \in \Delta(B)$, the buyer's surplus is

\begin{equation*}
    U(p;\us) \coloneqq {\sum_{j=1}^n p_j \cdot \max \{b_j - \pi(p;\us), 0 \}}
\end{equation*}
and the buyer surplus for a given segmentation $\sigma \in \Sigma$ is simply the expectation over the possible posteriors:

\begin{equation*}
    U(\sigma,\us) \coloneqq \E_{p \sim \sigma}[U(p;\us)] = \sum_{p \in \supp{\sigma}} \sigma(p) \cdot {\sum_{j=1}^n p_j \cdot \max \{b_j - \pi(p;\us), 0 \}}
\end{equation*}
where the summation over $p$ should be replaced with integration if $\supp{\sigma}$ is an uncountable set.
We denote by $U^*(\us)$ the optimal buyers surplus (across all segmentations), and by $\Sigma^*(\us)$ the set of optimal market segmentations, as the optimal segmentation may not be unique. \cite{bergemann2015} provide a very clean formula for $U^*(\us)$:

\begin{equation}
\label{eq: optimal surplus}
    U^*(\us) = \sum_{j=1}^n \mu_j \cdot \max \{b_j - \us,0\} - (b_{i^*} - \us) \cdot \sum_{j=i^*}^n \mu_j
\end{equation}

This formula has a clean interpretation. The first term captures the maximal social welfare. The second term is the monopolistic surplus of the seller. Notice that the monopolistic surplus can be guaranteed by the seller for every segmentation $\sigma$ (simply by ignoring the information). The sum of surpluses (buyer plus seller) cannot exceed the social welfare. From these trivial arguments, we deduce that the expression of Equation \eqref{eq: optimal surplus} is an upper bound on the buyer's surplus. The surprising result of \cite{bergemann2015} shows that this bound can be reached by a careful choice of segmentation. 

The behavior of the optimal surplus $U^*$ as a function of the seller's value for the product $\us$ will play a significant role in our analysis. In Lemma \ref{key_prop_lemma} (whose proof is relegated to Appendix \ref{app:l1}) we summarize the key properties that will be utilized.

\begin{lemma}
    \label{key_prop_lemma}
    For every prior buyer distribution $\mu$ the optimal buyer surplus function $U^*(\cdot)$ is weakly decreasing, absolutely continuous, nonnegative, and differentiable up to a finite number of points.
    Moreover, there exists $s^* \ge 0$ such that $U^*(\us)$ is constant over $[0,s^*]$, strictly decreasing over $[s^*,b_{n-1}]$, and constantly 0 over $[b_{n-1},\infty)$.
\end{lemma}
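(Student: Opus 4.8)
The plan is to argue directly from the closed form \eqref{eq: optimal surplus}. Write $U^*(\us)=W(\us)-M(\us)$, where $W(\us):=\sum_{j=1}^n\mu_j\max\{b_j-\us,0\}$ is the efficient welfare and $M(\us):=\max_{1\le i\le n}(b_i-\us)\,q_i=(b_{i^*}-\us)\,q_{i^*}$ is the monopoly profit, with the tail weights $q_i:=\sum_{j=i}^n\mu_j$. Both are continuous and piecewise linear: $W$ is a sum of the kinked convex functions $\max\{b_j-\us,0\}$, whose only kinks are at the $b_j$, while $M$ is a maximum of the $n$ affine maps $\us\mapsto(b_i-\us)q_i$, whose only kinks occur at the finitely many $\us$ where the maximizer changes. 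Hence $U^*$ is continuous, piecewise linear, differentiable off a finite set, and Lipschitz on compacts, so in particular absolutely continuous. Nonnegativity needs no formula: $U(\sigma,\us)$ is an expectation of the nonnegative terms $\max\{b_j-\pi,0\}$, so the optimum $U^*(\us)$ is nonnegative as well.

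For weak monotonicity I would compute $U^{*\prime}$ where it exists. Differentiating termwise, $W'(\us)=-\sum_{j:\,b_j>\us}\mu_j$, and since $M$ is the upper envelope of affine maps with slopes $-q_i$, its derivative on each linear piece is $-q_{i^*(\us)}$. Thus
\[
U^{*\prime}(\us)=q_{i^*(\us)}-\sum_{j:\,b_j>\us}\mu_j .
\]
As long as $\us<b_n$, the price $b_n$ already yields the strictly positive profit $(b_n-\us)\mu_n$, so the monopoly price obeys $b_{i^*}>\us$; therefore $\{j:\,b_j\ge b_{i^*}\}\subseteq\{j:\,b_j>\us\}$ and $q_{i^*}\le\sum_{j:\,b_j>\us}\mu_j$, giving $U^{*\prime}\le 0$. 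This is the weakly-decreasing claim.

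To locate the region where $U^*$ vanishes I would use, for $j\ge i^*$, the inequalities $b_j\ge b_{i^*}>\us$, which give
\[
W(\us)\;\ge\;\sum_{j\ge i^*}\mu_j(b_j-\us)\;\ge\;\sum_{j\ge i^*}\mu_j(b_{i^*}-\us)\;=\;M(\us).
\]
So $U^*(\us)=0$ forces equality in both inequalities: no buyer type lies strictly between $\us$ and $b_{i^*}$, and $b_j=b_{i^*}$ for every $j\ge i^*$ with $\mu_j>0$, i.e.\ $i^*=n$. These hold together exactly when the only type above $\us$ is $b_n$, that is $\us\ge b_{n-1}$ (the case $\us\ge b_n$ being trivial). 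Hence $U^*\equiv0$ on $[b_{n-1},\infty)$ and $U^*>0$ on $[0,b_{n-1})$.

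It remains to produce the constant-then-strictly-decreasing shape on $[0,b_{n-1}]$, and this is where the real work lies. The derivative formula shows $U^{*\prime}(\us)=0$ exactly when $b_{i^*(\us)}$ is the smallest type exceeding $\us$ (so that $q_{i^*}=\sum_{j:\,b_j>\us}\mu_j$), and $U^{*\prime}(\us)<0$ as soon as some type sits strictly between $\us$ and the monopoly price. A standard increasing-differences argument shows $i^*(\us)$ is weakly increasing: for $b_i<b_k$ one has $q_i\ge q_k$, so the advantage $(b_k-\us)q_k-(b_i-\us)q_i$ of the higher price is nondecreasing in $\us$. I would then set $s^*:=\inf\{\us\ge0:\,U^{*\prime}(\us)<0\}$ and try to show that the zero-derivative set is the initial interval $[0,s^*]$, with strict decrease on $[s^*,b_{n-1}]$. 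This single-threshold property is the main obstacle. For $n=2$---the case behind the tightness result---it is transparent, since $i^*(\us)$ can switch only once, from $b_1$ to $b_2$, yielding a single threshold. For general $n$ it requires carefully tracking how the increasing price $b_{i^*(\us)}$ interleaves with the (also increasing) smallest type above $\us$, and this comparative-statics bookkeeping is the step I expect to be hardest to nail down.
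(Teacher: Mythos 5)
Most of what you write is correct and, in places, more careful than the paper's own argument: the decomposition $U^*=W-M$ into welfare minus monopoly profit cleanly gives piecewise linearity, absolute continuity and differentiability off a finite set; the derivative formula $U^{*\prime}(\us)=q_{i^*(\us)}-\sum_{j:\,b_j>\us}\mu_j$ together with $b_{i^*}>\us$ gives weak monotonicity; and your equality analysis of $W\ge M$ correctly identifies $[b_{n-1},\infty)$ as the zero set. The step you leave open --- that $\{\us: U^{*\prime}(\us)=0\}$ is an initial interval --- is indeed a gap in your write-up. The paper closes it by asserting that for every $\us\in[s^*,b_{n-1})$ there is some $j<i^*$ with $b_j>\us$, which makes the welfare term strictly decreasing there.

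Your instinct that this is the hard step is better than you realize: the single-threshold claim is \emph{false}, so the gap cannot be filled, and the lemma as stated fails. Your own derivative formula shows why: $U^{*\prime}(\us)=0$ exactly when the monopoly price is the lowest type above $\us$, and since both $i^*(\us)$ and $k(\us)=\min\{j:b_j>\us\}$ increase, they can coincide, separate, and coincide again. Concretely, take $n=3$, $b=(1,2,3)$, $\mu=(0.6,0.3,0.1)$, so the tail weights are $(1,0.4,0.1)$ and the revenues are $(1-\us)$, $0.4(2-\us)$, $0.1(3-\us)$. The optimal price is $b_1$ on $[0,\tfrac13]$, $b_2$ on $(\tfrac13,\tfrac53]$, and $b_3$ thereafter, which yields $U^*(\us)=0.5$ on $[0,\tfrac13]$, $U^*(\us)=0.7-0.6\us$ on $[\tfrac13,1]$, $U^*(\us)=0.1$ on $[1,\tfrac53]$, and $U^*(\us)=0.6-0.3\us$ on $[\tfrac53,2]$: constant, strictly decreasing, constant again, strictly decreasing. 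Here $s^*=\tfrac13$, and at $\us=1.2$ we have $i^*=2$ but $b_1=1<\us$, so the paper's asserted existence of $j<i^*$ with $b_j>\us$ fails. The correct conclusion is only that $U^*$ is weakly decreasing on $[s^*,b_{n-1}]$ (with possible interior flat segments); if you pursue this, note that the designer's guarantee in Lemma \ref{main_tech_lemma} survives under weak monotonicity, since $g=-u'/u$ still integrates to $\ln u(\alpha)-\ln u(\delta)$ and $F=c/u$ remains a valid (weakly increasing) CDF, so the downstream results are not endangered --- but the lemma itself needs restating.
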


\section{Robust Price Discrimination}
\label{robust_subsection}
From now on, we consider the \emph{robust approach} for studying the price discrimination model with uncertainty regarding the seller's valuation. 
Departing from the standard model of \citet{bergemann2015}, we assume that $\us$ is unknown to the market designer, who chooses a segmentation $\sigma$.\footnote{In settings that admit private information for the seller, an alternative approach is to allow the designer to choose a menu of segmentations, from which the seller chooses one. This approach has been adopted by \cite{SYnew} when the private information of the seller comes in the form of partial information about the buyer's valuation.}
The designer’s objective is to minimize the \emph{regret}, defined as the maximal disparity between the optimal buyer surplus (had 
$\us$ been known) and the surplus generated by the segmentation, maximized over all possible values of $\us$.\footnote{Another natural model that might be considered is the Bayesian one: $\us$ is drawn according to a commonly known distribution $\us\sim F$. Notably, this Bayesian model boils down to a standard Bayesian persuasion problem \citep{kamenica2011bayesian}. We elaborate on the Bayesian model in Section \ref{binary_type_section}, as it plays a crucial role in our analysis, particularly in showing that our regret bound is tight in the case of binary buyer valuation.}
Formally, the regret of a given segmentation is defined as follows:

\begin{equation*}
    R(\sigma) \coloneqq \max_{\us}\{U^*(\us) - U(\sigma,\us)\}
\end{equation*}

and the \emph{overall regret} is defined as the minimal regret that can be achieved by any market segmentation:

\begin{equation}
\label{standard_reg_eq}
    R \coloneqq \min_{\sigma \in \Sigma} {R(\sigma)} 
\end{equation}

It is worth highlighting that finding a closed form of the regret-minimizing market segmentation seems to be a very challenging problem: the outer minimization problem defined by Equation \eqref{standard_reg_eq} is intractable since the minimization is taken over an infinite-dimensional space $\Sigma$ (notice that, unlike in the standard model of \citet{bergemann2015}, we can not restrict attention to segmentations with finite support). 

Our goal, then, is to find a simple and intuitive market segmentation that is independent of $\us$ and achieves low regret. That is, we want to find an information revelation policy of the market designer that is robust to the seller's valuation $\us$, and achieves a near-optimal buyer surplus regardless of it. 
Despite the technical difficulties in solving the regret-minimization problem, we are able to introduce such a robust market segmentation of the designer, and show that the overall regret is bounded from above by $U^*(0)/e$. Then, we also show that this bound is tight for the case of binary buyer type, i.e. when $n=2$. This robust market segmentation takes the following form: the market designer first draws $s_D$, and then it applies the optimal segmentation of \cite{bergemann2015} \emph{as if} $\us = s_D$. We begin by formally defining the class of segmentations that contains our robust market segmentation for the market designer:

\begin{definition}\label{def:BBM}
    A BBM market segmentation is a strategy of the market designer $\sigma \in \Sigma$, for which there exists a random variable $s_D$ such that for every posterior $p \in \Delta(B)$, $p$ is drawn with probability $\sigma(p) = \E [\sigma^*(s_D)(p)]$, where $\sigma^*(s_D) \in \Sigma^*(s_D)$ is some optimal segmentation.
\end{definition}

\begin{remark}
    While the optimal segmentation $\sigma^*(s_D)$ may not be unique, all of our theoretical results (Theorem \ref{main_res_theorem} and Theorem \ref{main_res_bintype_theorem}) are agnostic to the choice of $\sigma^*(s_D)$. However, as we demonstrate in Example \ref{example:unif3_cont}, the implementation of the optimal segmentation may affect the actual regret that can be achieved by such a BBM segmentation.
\end{remark}

When $s_D$ is a continuous random variable, the BBM segmentation can be identified with the corresponding density function $g$. In this case, the unconditional posterior distribution can be written as $\sigma(p) = \int_{s_D=0}^{b_n} g(s_D) \cdot \sigma^*(s_D)(p) \cdot ds_D$. Now, the following theorem introduces our main result:

\begin{theorem}
    \label{main_res_theorem}
    The overall regret is bounded from above by $\frac{U^*(0)}{e}$, and this bound is attained by a BBM market segmentation.
\end{theorem}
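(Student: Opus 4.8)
The plan is to restrict attention to BBM segmentations, which reduces the designer's infinite-dimensional strategy space to the single-dimensional choice of a distribution $g$ over the guessed seller value $s_D$. Under this restriction, I would first derive a workable expression for the buyer surplus $U(\sigma,\us)$ generated by a BBM segmentation when the true seller value is $\us$ and the designer has guessed $s_D$. The key structural input is that the segmentation $\sigma^*(s_D)$ is optimal for the hypothetical value $s_D$, so its posteriors are tailored to make price $s_D$ (or nearby thresholds) extract exactly the monopolistic surplus. The natural first step is thus to understand, for a fixed pair $(s_D,\us)$, how much buyer surplus survives when the seller, facing segments optimized for $s_D$, instead best-responds to her true value $\us$. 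I expect this to yield an expression in terms of $U^*(\cdot)$ evaluated at the relevant points.

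Next I would set up the zero-sum game between the designer (choosing $g \in \Delta(\mathbb{R}_+)$) and the adversary (choosing $\us$), with payoff equal to the regret $U^*(\us) - U(\sigma,\us)$. Following the ``Techniques'' outline, the crucial simplification is to bound the per-instance regret $U^*(\us)-U(\sigma,\us)$ from above by a cleaner surrogate expression, so that the resulting game admits an explicit solution. I anticipate that this surrogate will be phrased entirely through the surplus function $U^*$, exploiting the properties from Lemma \ref{key_prop_lemma} — weak monotonicity, absolute continuity, and the interval structure where $U^*$ is constant, then strictly decreasing, then zero. The clean form should let me invoke the equilibrium analysis of the auxiliary game, which the paper packages as Lemma \ref{main_tech_lemma}.

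The heart of the argument is then to optimize over $g$ to minimize the worst-case surrogate regret, and to show the minimax value equals exactly $U^*(0)/e$. Here I would guess that the optimal density $g$ is chosen so that the adversary is rendered indifferent across a whole range of $\us$ values — an equalizing strategy — which is the standard signature of a regret-minimizing solution. The hint in the introduction that the optimal density equals the hazard function of the survival function associated with $U^*$ strongly suggests the following mechanics: writing $S(\us) = U^*(\us)/U^*(0)$ as a survival-type function, the indifference condition becomes a first-order differential equation whose solution forces the equalized regret level to be $U^*(0)/e$, with the factor $1/e$ emerging from integrating the hazard rate and exponentiating. Concretely, I would impose that the designer's expected surplus, as a function of the adversary's choice $\us$, plus the regret constant, reconstructs $U^*(\us)$; differentiating this identity and solving should pin down $g$ as the hazard function and simultaneously fix the constant at $U^*(0)/e$.

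The main obstacle, I expect, is the first step: obtaining and justifying the surrogate upper bound on $U(\sigma,\us)$ for a BBM segmentation. Because $\sigma^*(s_D)$ is optimal for $s_D$ but the seller reacts to the true $\us$, I need to control how surplus degrades as $\us$ moves away from $s_D$ in both directions — when $\us > s_D$ the seller prices more aggressively and may kill surplus on low-type segments, and when $\us < s_D$ the seller may price below the segment's design point. Showing that the aggregate effect over the random draw of $s_D$ is dominated by the clean $U^*$-based surrogate, with equality precisely in the binary case (which is what makes Theorem \ref{main_res_bintype_theorem} possible), is where the real work lies. Once that bound is in hand, the game-theoretic optimization and the identification of the $1/e$ constant via the hazard-function characterization should follow from the explicit analysis in Lemma \ref{main_tech_lemma}.
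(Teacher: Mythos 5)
Your plan follows the paper's route exactly: restrict to BBM segmentations, view the problem as a zero-sum game between the designer's guess $s_D$ and the adversary's $\us$, replace the true regret by a surrogate expressed through $U^*$ alone, and then invoke Lemma \ref{main_tech_lemma} (with $u=U^*$, using the structure from Lemma \ref{key_prop_lemma}) to extract the value $U^*(0)/e$ and the hazard-rate density. The identification of the equalizing strategy and of the $1/e$ constant is also the right mechanism, and is precisely what Lemma \ref{main_tech_lemma} delivers.

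The gap is the step you yourself flag as ``where the real work lies'': you never actually produce the surrogate bound, and you somewhat overestimate its difficulty. In the paper it follows from two one-line observations, with no need to track how the posteriors of $\sigma^*(s_D)$ interact with the seller's best response to $\us$. First, for \emph{any} $\us$ and $s_D$, nonnegativity of buyer surplus gives $U^*(\us)-U(\sigma^*(s_D),\us)\le U^*(\us)$; this alone handles the direction $\us>s_D$ (the surrogate simply concedes that all surplus may be destroyed there, which is also why the bound is exact for equal-revenue implementations). Second, for $\us\le s_D$, one uses that for a \emph{fixed} segmentation $\sigma$ the surplus $U(\sigma,\us)$ is non-increasing in $\us$ (since $\pi(p;\us)$ is weakly increasing in $\us$ for each posterior $p$), so $U(\sigma^*(s_D),\us)\ge U(\sigma^*(s_D),s_D)=U^*(s_D)$, giving the surrogate $U^*(\us)-U^*(s_D)$. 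Without these two observations (plus the routine elimination of strategies above $b_{n-1}$ so that the hypotheses of Lemma \ref{main_tech_lemma} apply), the reduction to the auxiliary game is not established and the proof is incomplete. Also note that your parenthetical claim that the surrogate holds with equality precisely in the binary case is not quite how the paper argues tightness; equality of the surrogate is tied to the equal-revenue implementation (Lemma \ref{lemma:equal_revenue_segments}), while tightness for $n=2$ is proved separately via concavification against an explicit adversarial prior.
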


The proof of the theorem relies on the analysis of a specific class of zero-sum games with a continuum of actions. The class of games and the solution is summarized in the following lemma, whose proof is relegated to Appendix \ref{app:l2}.

\begin{lemma}
    \label{main_tech_lemma}
    Let $0<\alpha<\beta$, and let $u:[0,\beta]\to \mathbb{R}_+$ be an absolutely continuous function that is constant in $[0,\alpha]$ and strictly decreasing in $[\alpha,\beta]$. Let $v$ be a two-player zero-sum game in which players 1,2 choose real numbers $x,y\in [0,\beta]$ (correspondingly). The utility of Player 1 is given by
        \begin{equation*}
        v(x,y) = \left\{\begin{array}{lr}
            u(x), & \text{for } x > y \\
            u(x)-u(y), & \text{for } x \le y
            \end{array}\right\}
        \end{equation*}
    
    Then, the value of the game is $\frac{u(0)}{e}$, and it can be guaranteed to Player 2 by playing a mixed strategy with the following density function:
    \begin{equation*}
        g(y) = \left\{\begin{array}{lr}
        -\frac{u'(y)}{u(y)}, & \text{for } 
            \alpha \le y \le \delta \\
            0, & \text{otherwise}
        \end{array}\right\}
    \end{equation*}
    for $\delta \in (\alpha,\beta)$ such that $u(\delta) = \frac{u(0)}{e}$.\footnote{The monotonicity of $u$ implies that $u$ is differentiable almost everywhere, and therefore the density function $g$ is defined almost everywhere.}
\end{lemma}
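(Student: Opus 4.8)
The plan is to prove this by exhibiting a strategy for Player 2 that caps Player 1's payoff at $u(0)/e$ (establishing the value is \emph{at most} $u(0)/e$), and a strategy for Player 1 that guarantees at least $u(0)/e$ (establishing the value is \emph{at least} $u(0)/e$). Since the proposed equilibrium strategy $g$ for Player 2 is given explicitly, the natural route is to verify directly that against $g$, every pure response $x$ of Player 1 yields expected payoff at most $u(0)/e$, and then to find a matching guarantee for Player 1.

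First I would analyze Player 1's expected payoff against the density $g$ supported on $[\alpha,\delta]$. Writing $V(x) = \int v(x,y)\,g(y)\,dy$, I would split the integral at $x$: for the region $y < x$ the integrand is $u(x)$, and for $y \ge x$ it is $u(x) - u(y)$. The key computational step is recognizing that $g(y) = -u'(y)/u(y) = -\tfrac{d}{dy}\log u(y)$, so that $\int g(y)\,dy$ over a subinterval produces logarithms of $u$, and the total mass condition $\int_\alpha^\delta g = \log\frac{u(\alpha)}{u(\delta)} = \log\frac{u(0)}{u(0)/e} = 1$ confirms $g$ is a valid density (using $u$ constant on $[0,\alpha]$ so $u(\alpha)=u(0)$, and the defining relation $u(\delta)=u(0)/e$). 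I would then compute $V(x)$ piecewise: for $x \le \alpha$, for $\alpha \le x \le \delta$, and for $x \ge \delta$, exploiting that $\int_x^\delta u(y)g(y)\,dy = \int_x^\delta -u'(y)\,dy = u(x)-u(\delta)$ collapses cleanly. I expect each piece to simplify to exactly $u(0)/e$ on the support and to at most $u(0)/e$ off it, so that $\sup_x V(x) = u(0)/e$.

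For the matching lower bound --- that Player 1 can guarantee $u(0)/e$ regardless of Player 2's strategy --- I would look for a suitable mixed strategy for Player 1 (or argue via the minimax theorem together with the indifference structure revealed above). The cleanest approach is to observe from the computation that Player 1 is \emph{indifferent} across the whole support $[\alpha,\delta]$ when facing $g$, which strongly suggests the equilibrium is interior on this interval; I would then construct Player 1's equilibrium mixed strategy to make Player 2 indifferent over $[\alpha,\delta]$ and verify it holds the value down to $u(0)/e$. Because the action spaces are compact and the payoff is bounded, a minimax theorem applies, so it suffices to pin down the value via the two one-sided bounds.

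The main obstacle will be the lower bound: the upper-bound verification against the explicit $g$ is essentially a guided calculation, but identifying and validating Player 1's optimal strategy requires care, since the payoff $v(x,y)$ is discontinuous along the diagonal $x=y$ and is not concave/convex in an obvious way. I would pay particular attention to the discontinuity at $x=y$ (recalling the payoff uses $u(x)$ for $x>y$ but $u(x)-u(y)$ for $x \le y$), checking that the tie-breaking does not let Player 1 exploit a jump, and to boundary behavior at $x=\alpha$ and $x=\delta$. Confirming that $\delta \in (\alpha,\beta)$ with $u(\delta)=u(0)/e$ exists and is unique follows from Lemma~\ref{key_prop_lemma}-style monotonicity: $u$ is continuous and strictly decreasing on $[\alpha,\beta]$ with $u(\alpha)=u(0) > u(0)/e > 0$, and I would invoke the intermediate value theorem, noting $u(0)/e < u(0) = u(\alpha)$ guarantees $\delta > \alpha$.
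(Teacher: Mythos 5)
Your proposal is correct and follows essentially the same route as the paper: verify that the given density $g$ holds Player 1 to $u(0)/e$ via the piecewise computation you describe (using the collapse $\int_x^\delta u(y)g(y)\,dy = u(x)-u(\delta)$ and the unit-mass check), and establish the matching guarantee by constructing Player 1's indifference-inducing mixed strategy, which the paper takes to be $F(y)=u(\delta)/u(y)$ on $[\alpha,\delta]$ together with an atom of mass $1/e$ at $0$. The only step you leave implicit is writing down that $F$ explicitly, but it falls out immediately from the indifference condition $\mathbb{E}_{x\sim f}[u(x)] - F(y)\,u(y) = \text{const}$ that you correctly identify.
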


Interestingly, the family of zero-sum games considered by Lemma \ref{main_tech_lemma} extends the zero-sum game analyzed by \cite{bergemann2008pricing,bergemann2011robust} to derive a robust monopoly pricing strategy. In their setup, a monopolist sets a price $p \in [0,1]$ without knowing the buyer's valuation $b \in [0,1]$, and aims to minimize the regret, which can be written as $b - p$ if $p \le b$, and $b$ otherwise. The resulting zero-sum game is a private case of ours, for the case where $u(t) = 1-t$.\footnote{To see the mapping between the games, simply set $x = 1-b$ and $y= 1-p$.} From this perspective, our result can be seen as extending the technique from the one-dimensional robust monopoly pricing problem (the seller chooses a \emph{price}) to an infinite-dimensional problem of robust price discrimination (the designer chooses a \emph{segmentation}).

Notice that the value of the game in Lemma \ref{main_tech_lemma} depends on the initial condition $u(0)$ and does not depend on the behavior of the function $u$ besides the monotonicity property. This property is very surprising because the utilities in the game depend on the behavior of $u$ on the entire interval $[0,\beta]$ and in equilibrium, both players are playing actions in $[\alpha,\beta]$ with positive probability. Intuitively, this property follows from the particular additive structure of the payoffs and the ability of the players to adjust their mixed strategy to "cancel out" the dependence on the particular behavior of $u$ in $[\alpha,\beta]$. Moreover, for both players, it is optimal to cancel out this dependence.
We now turn to prove Theorem \ref{main_res_theorem}:

\begin{proof}[\textbf{Proof of Theorem \ref{main_res_theorem}.}]

Consider a zero-sum game between the market designer and an adversary, in which the adversary chooses $\us$ and the market designer chooses $s_D$, and then plays $\sigma^*(s_D)$. The utility of the adversary in this zero-sum game is defined to be the difference between the optimal buyer surplus $U^*(\us)$, and the actual buyer surplus, $U(\sigma^*(s_D),\us)$. Trivially, the value of this game is an upper bound on the overall regret, since the market designer is forced to play a BBM segmentation. Denote the utility function of the adversary in the auxiliary game by $v$, and by definition it holds that:

\begin{equation}
\label{eq_thrm_1_orig_utils}
    v(\us,s_D) = U^*(\us) - U(\sigma^*(s_D),\us)
\end{equation}

Now, notice that from the non-negativity of the buyer surplus, it holds that $v(\us,s_D) \le U^*(\us)$ for any $\us$ and $s_D$. Moreover, for any fixed segmentation $\sigma$, the buyer surplus $U(\sigma,\us)$ is non-increasing as a function of $\us$. 
Therefore, $\us \le s_D$ implies that $U(\sigma^*(s_D),\us) \ge U(\sigma^*(s_D),s_D) = U^*(s_D)$, which means that $v(\us,s_D) \le U^*(\us) - U^*(s_D)$. Altogether, for any $\us$ and $s_D$, it holds that $v(\us,s_D)$ is bounded from above by the following function:

\begin{equation}
\label{eq_thrm_1_relaxed_utils}
    \tilde{v}(\us,s_D) = \left\{\begin{array}{lr}
        U^*(\us), & \text{for } \us > s_D \\
        U^*(\us)-U^*(s_D), & \text{for } \us \le s_D
        \end{array}\right\}
\end{equation}

Therefore, the value of the game defined by $\tilde{v}$ is an upper bound of the value of the game defined by $v$, and hence it also bounds the overall regret from above. 

Note that for the adversary, playing $\us>b_{n-1}$ in the game defined by $\tilde{v}$ yields utility $0$ regardless of the market designer's strategy $s_D$, hence it is weakly a dominated strategy (e.g. by $b_{n-1}$).
Now, for the market designer, playing $s_D>b_{n-1}$ is equivalent to $b_{n-1}$ since $s_D \ge \us$ and $U^*(s_D)=0$. Therefore, for the purpose of finding the value of the game, it can be assumed without loss of generality that both players play $s_D > b_{n-1}$ and $\us > b_{n-1}$ with probability zero.

Now, Lemma \ref{key_prop_lemma} implies that the game defined by $\tilde{v}$ (after strategies elimination) satisfies the conditions of Lemma \ref{main_tech_lemma}. Hence, the value of the game defined by $\tilde{v}$ is $\frac{U^*(0)}{e}$, and therefore this is an upper bound on the overall regret that can be attained by a BBM segmentation of the market designer.

\end{proof}

Notice that Lemma \ref{main_tech_lemma} enables the construction of a concrete market segmentation that guarantees this upper bound on the regret since the market designer corresponds to Player 2 in Lemma \ref{main_tech_lemma}. The optimal strategy of the designer can be interpreted through the lens of \emph{survival analysis}. Using the notation of the lemma, one can think of the non-increasing function $u(t)$ as a \emph{survival function}, specifying the survival probability beyond time $t$.\footnote{For this interpretation, one can assume that $u$ is normalized such that $u(0)=1$.}
The designer's optimal strategy is then to draw $s_D$ from a distribution whose density corresponds to the \emph{hazard function} for this survival function, defined as:

\begin{equation*}
    h(t) = -\frac{u'(t)}{u(t)}
\end{equation*}

Now, recall that $u(s)$ represents the optimal surplus function $U^*(s)$, so the hazard function quantifies the instantaneous "decay rate" of the optimal surplus relative to the remaining potential surplus at any valuation $s$. Intuitively, it captures how the designer adjusts the segmentation strategy to mitigate loss in buyer surplus as $s$ increases.
This "hazard strategy", which turns out to be the optimal among all BBM strategies in the relaxed zero-sum game, can be seen as ensuring that the "survival probability" of the buyer's optimal surplus is maximally preserved, even under adversarial conditions. In this sense, the designer segments the market as if it were managing the survival of surplus over the space of seller valuations.

\section{Tightness in the Case of Binary Buyer Type}
\label{binary_type_section}

We now turn to analyze the special case in which there are only two possible buyer types, namely $n=2$. We show that in this case, the upper bound on the overall regret obtained in Theorem \ref{main_res_theorem} is tight, by showing a mixed strategy of the adversary that guarantees a regret of at least $\frac{U^*(0)}{e}$, and combined with the upper bound presented in Theorem \ref{main_res_theorem} we conclude that the overall regret is precisely $\frac{U^*(0)}{e}$.

Let $b_2$ and $b_1$ be the two possible buyer valuations. Without loss of generality, we assume that $b_2 - b_1 = 1$.\footnote{This is without loss of generality since both the regret and the optimal buyer surplus function are linear in the difference between the two buyer valuations.} For any posterior $p \in \Delta(B)$ we identify $p$ with $\prob_p(b_1) = p_1$. We note that now the seller's optimal price and the buyer surplus function take the following simpler form for any given posterior $p$ and seller valuation $\us$:

\begin{equation*}
    \pi(p;\us) \coloneqq \left\{\begin{array}{lr}
        b_1, & \text{for }b_1 - \us \ge (b_2 - \us)(1 - p)\\
        b_2, & \text{for }b_1 - \us < (b_2 - \us)(1 - p)
        \end{array}\right\}
\end{equation*}

\begin{equation*}
    U(p;\us) \coloneqq \left\{\begin{array}{lr}
        1-p, & \text{for }b_1 - \us \ge (b_2 - \us)(1 - p)\\
        0, & \text{for }b_1 - \us < (b_2 - \us)(1 - p)
        \end{array}\right\}
\end{equation*}

In the following technical Lemma, we use the result of \cite{bergemann2015} to obtain a closed form of the optimal buyers surplus when $\us$ is known to the market designer:

\begin{lemma}
\label{bbm_lemma}
    When $n=2$ and $\us$ is known to the market desginer, the optimal buyers surplus is given by: 
    \begin{equation*}
        U^*(\us) = \left\{\begin{array}{lr}
            1 - \mu, & \text{for } \us < b_2 - \frac{1}{\mu}
             \\
            
            (b_1 - \us) \mu, & \text{for } b_2 - \frac{1}{\mu} \le \us \le b_1 \\
            
            0, & \text{for } \us > b_1 \end{array}
        \right\}
    \end{equation*}
\end{lemma}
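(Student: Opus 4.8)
The plan is to specialize the closed-form surplus formula of \cite{bergemann2015}, i.e. Equation \eqref{eq: optimal surplus}, to the case $n=2$. Writing $\mu_1=\mu$ and $\mu_2=1-\mu$ for the prior probabilities of $b_1$ and $b_2$, the only object in \eqref{eq: optimal surplus} that depends on $\us$ in a non-linear way is the monopolistic price $b_{i^*}=\pi(\mu;\us)$. Hence the whole argument reduces to (i) determining $\pi(\mu;\us)$ as a function of $\us$, and (ii) substituting the resulting value of $i^*$ into \eqref{eq: optimal surplus} and simplifying on each region.

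For step (i), I would compare the seller's two candidate revenues under the prior $\mu$: posting $b_1$ yields $(b_1-\us)(\mu_1+\mu_2)=b_1-\us$ (all buyers purchase), while posting $b_2$ yields $(b_2-\us)\mu_2=(b_2-\us)(1-\mu)$. Using $b_2-b_1=1$ and the tie-breaking convention that the seller posts the lower price when indifferent, a direct manipulation of the inequality $(b_1-\us)\ge (b_2-\us)(1-\mu)$ gives that $\pi(\mu;\us)=b_1$ exactly when $\us\le b_2-\tfrac1\mu$, and $\pi(\mu;\us)=b_2$ otherwise. This is the source of the threshold $b_2-\tfrac1\mu$ appearing in the statement; note that since $\mu<1$ we have $b_2-\tfrac1\mu<b_1$, so the three regions in the lemma indeed partition $\R_{+}$ in the stated order.

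For step (ii), I would plug $i^*=1$ and $i^*=2$ respectively into \eqref{eq: optimal surplus}. On $\us<b_2-\tfrac1\mu$ (so $i^*=1$ and $\us<b_1<b_2$, hence both values exceed $\us$) the welfare term minus the monopolistic term $(b_1-\us)$ collapses, after cancellation, to $(1-\mu)(b_2-b_1)=1-\mu$. On $b_2-\tfrac1\mu\le\us\le b_1$ (so $i^*=2$) the monopolistic term $(b_2-\us)(1-\mu)$ cancels the $b_2$ welfare contribution and leaves exactly $\mu(b_1-\us)$. For $\us>b_1$ the low value contributes $\max\{b_1-\us,0\}=0$, the monopolistic price is $b_2$, and the remaining $b_2$ terms cancel, giving $0$ (and trivially $0$ once $\us>b_2$, when no trade occurs).

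Finally I would record that the three branches agree at the boundaries --- substituting $\us=b_2-\tfrac1\mu$ into $\mu(b_1-\us)$ returns $1-\mu$, and $\us=b_1$ gives $0$ --- so that $U^*(\cdot)$ is continuous, consistent with Lemma \ref{key_prop_lemma} (here with $s^*=b_2-\tfrac1\mu$ and $b_{n-1}=b_1$). No step is genuinely difficult; the only point requiring care is the threshold derivation in step (i), together with keeping track of the $\max\{\cdot,0\}$ operators and verifying that the case boundaries are placed correctly, which is where a sign error would most easily creep in.
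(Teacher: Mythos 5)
Your proposal is correct, and it takes a mildly but genuinely different route from the paper. You specialize the closed-form surplus formula \eqref{eq: optimal surplus} to $n=2$: you first determine the monopolistic price by comparing the revenues $b_1-\us$ and $(b_2-\us)(1-\mu)$ under the prior (obtaining the threshold $\us\le b_2-\tfrac1\mu$, with the tie broken toward $b_1$), and then substitute $i^*=1$ or $i^*=2$ and simplify region by region. The paper instead handles the case $\us>b_1$ by a direct no-trade argument and, for $\us\le b_1$, shifts to $\tilde b_i=b_i-\us$ and re-derives the surplus by explicitly constructing the optimal segmentation of \cite{bergemann2015} --- no segmentation when $\tilde b_1$ is already optimal at the prior, and otherwise a two-segment scheme $\{(0,1),(p,1-p)\}$ pinned down by the seller's indifference condition and Bayes plausibility, yielding $(1-p)(1-\alpha)=(b_1-\us)\mu$. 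Your derivation is more economical given that \eqref{eq: optimal surplus} is already stated and justified in Section \ref{model_section}, and the same threshold computation appears in both arguments; the paper's construction has the side benefit of exhibiting the optimal segmentation itself, which is the object the BBM strategy of Definition \ref{def:BBM} actually randomizes over. Your boundary checks (agreement of the branches at $\us=b_2-\tfrac1\mu$ and at $\us=b_1$, and the ordering $b_2-\tfrac1\mu<b_1$ from $\mu<1$) are the right points to verify, and the caveat that for $\us>b_2$ one should argue no trade directly rather than apply \eqref{eq: optimal surplus} literally is correctly noted.
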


\begin{figure}[t]
\centering
\begin{tikzpicture}
    \definecolor{darkgreen}{rgb}{0.0, 0.5, 0.0} 
    \begin{axis}[
        axis lines = left,
        xlabel = $s$,
        xtick={0,1/2,1},
        xticklabels={0,$b_2 - \frac{1}{\mu}$,$b_1$},
        ytick=\empty, 
        ymax = 0.4,
        thick,
        legend pos=north east,
        legend cell align={left} 
    ]
    \addplot [domain=0:0.5, samples=100, color=darkgreen, thick, forget plot] {1/3};
    \addplot [domain=0.5:1, samples=100, color=darkgreen, thick, forget plot] {2*(1-x)/3};
    \addplot [domain=1:1.3, samples=100, color=darkgreen, thick] {0};
    
    \addlegendentry{$U^*(s)$}
    \end{axis}
\end{tikzpicture}
\caption{The optimal buyer surplus function in the case of binary buyer type.}
\label{fig_opt_surplus}
\end{figure}

The proof of Lemma \ref{bbm_lemma} is relegated to Appendix \ref{app:3}. Figure \ref{fig_opt_surplus} visualizes the optimal buyer surplus function in the case of binary buyer type. Using this technical Lemma we can now conclude that the overall regret in the binary buyer type case is precisely $\frac{U^*(0)}{e}$:

\begin{theorem}
\label{main_res_bintype_theorem}
    When $n=2$, the overall regret is $\frac{U^*(0)}{e}$, and it is obtained by a BBM market segmentation.
\end{theorem}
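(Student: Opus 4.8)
Since Theorem \ref{main_res_theorem} already supplies the upper bound $R \le U^*(0)/e$, the task is the matching lower bound $R \ge U^*(0)/e$. The plan is to exhibit a single mixed strategy $F \in \Delta(\R_+)$ of the adversary over the seller's valuation and to show that \emph{every} segmentation fares poorly against it. Since $R(\sigma) = \max_{\us}\{U^*(\us) - U(\sigma,\us)\} \ge \E_{\us\sim F}[U^*(\us) - U(\sigma,\us)]$ for every $\sigma$, it suffices to prove
\begin{equation*}
    \E_{\us \sim F}[U^*(\us)] - \max_{\sigma \in \Sigma} \E_{\us \sim F}[U(\sigma,\us)] \ge \frac{U^*(0)}{e}.
\end{equation*}
Interchanging the two expectations, the designer's best reply against $F$ is $\max_{\sigma}\E_{p\sim\sigma}[W(p)]$ with $W(p) := \E_{\us\sim F}[U(p;\us)]$, and by Bayes-plausibility this equals the concave closure $\mathrm{cav}(W)$ evaluated at the prior $\mu$. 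The point that makes the binary case tractable (and the reason we expect a lower bound only here) is that for $n=2$ the posterior is a single number $p\in[0,1]$, so this is a \emph{one-dimensional} concavification.

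For $F$ I would take the adversary's optimal strategy in the auxiliary zero-sum game from the proof of Theorem \ref{main_res_theorem}; for $n=2$ the relaxed payoff $\tilde v$ coincides with the true payoff $v$, so this strategy is genuinely optimal. Writing $u = U^*$ and recalling from Lemmas \ref{key_prop_lemma} and \ref{bbm_lemma} that $u$ is flat on $[0,s^*]$ and equals $(b_1-\us)\mu$ on $[s^*,b_1]$, the designer's indifference condition $\E_F[\tilde v(\cdot,y)] \equiv U^*(0)/e$ on the support $[s^*,\delta]$ forces the CDF $F(y) = \tfrac{U^*(0)}{e\,u(y)}$ for $y\in[s^*,\delta]$, where $\delta$ solves $u(\delta)=U^*(0)/e$. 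This accounts for mass $1-1/e$ on $(s^*,\delta)$, and a short computation (integrating $u'/u$) gives $\E_F[u] = 2U^*(0)/e$. The remaining mass $1/e$ lives on the flat region $[0,s^*]$, and the essential modeling choice is to place it as a single atom \emph{exactly at} $s^*$.

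With this $F$, I would then compute $W$ and concavify. Using the binary-case form $U(p;\us) = (1-p)\,\ind{\us \le b_2 - 1/p}$, one gets $W(p) = (1-p)\,F(b_2 - 1/p)$. Substituting $u(y)=(b_1-y)\mu$ and the identity $b_1 - (b_2 - 1/p) = (1-p)/p$, the middle region collapses to the clean linear form $W(p) = \tfrac{1-\mu}{e\mu}\,p$ for $p\in[\mu,p_\delta]$, where $p_\delta$ solves $b_2 - 1/p_\delta = \delta$; for $p>p_\delta$ one has $W(p)=1-p$; and because the flat-region mass is atomic at $s^*$, $W(p)=0$ for all $p<\mu$. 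The line through the origin $\ell(p) = \tfrac{1-\mu}{e\mu}p$ then supports $W$ from above at $p=\mu$ (it equals $W$ on $[\mu,p_\delta]$, dominates $1-p$ beyond $p_\delta$, and dominates $0$ below $\mu$), so $\mathrm{cav}(W)(\mu) = W(\mu) = \tfrac{1-\mu}{e} = U^*(0)/e$. Hence the best response to $F$ leaves regret $\E_F[u] - \mathrm{cav}(W)(\mu) = 2U^*(0)/e - U^*(0)/e = U^*(0)/e$, giving the lower bound; combined with Theorem \ref{main_res_theorem} this yields $R = U^*(0)/e$, attained by a BBM segmentation.

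The main obstacle, and the delicate point of the whole argument, is precisely the placement of the $1/e$ of mass on $[0,s^*]$. The game $\tilde v$ is insensitive to this placement because $u$ is constant there, so $F$ is far from unique as a game-optimal strategy, yet $W$ depends on it sharply. If the mass were instead put at $0$ (or spread over $[0,s^*]$), then $W(p)=(1-p)/e$ would be positive and decreasing just below $\mu$, creating a convex "V" at $\mu$; the concave closure would chord across it, making $\mathrm{cav}(W)(\mu) > U^*(0)/e$ and letting the designer beat the bound. That is exactly a mixed adversary strategy whose best reply is \emph{not} a BBM segmentation (cf.\ Example \ref{example2}). Atomizing the mass at $s^*$ is what annihilates every surplus-generating opportunity for posteriors below $\mu$, forcing $W\equiv 0$ there and turning the origin-line into a valid supporting line. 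Verifying this global domination $W\le\ell$ on $[0,1]$, together with the indifference computation pinning down $F$ and the identity $\E_F[u]=2U^*(0)/e$, is the technical heart of the proof.
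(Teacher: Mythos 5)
Your proposal is correct and follows essentially the same route as the paper: rewrite the lower bound via the Bayesian/concavification reformulation (only weak duality is needed for this direction), feed in the adversary distribution with CDF proportional to $1/(b_1-t)$ on the strictly decreasing part of $U^*$ (which is exactly the paper's $F_{\beta^*}$, with $\beta^*=\delta$ solving $U^*(\delta)=U^*(0)/e$), and check that $\E_F[U^*]=2U^*(0)/e$ while $\mathrm{cav}(W)(\mu)=(b_1-\delta)\mu=U^*(0)/e$. The only divergence is where you park the residual mass $1/e$ on the flat region $[0,s^*]$: you atomize it at $s^*$, whereas the paper extends the reciprocal CDF $F_\beta(t)=\tfrac{b_1-\beta}{b_1-t}$ all the way down to $t=0$ with an atom at $0$; both choices work, and both give the same $\E_F[U^*]$ since $U^*$ is constant there. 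Your parenthetical claim that spreading the mass over $[0,s^*]$ would create a convex kink at $\mu$ and break the bound is therefore too strong: it is true for a single atom at $0$ (where $W(p)=(1-p)/e$ indeed pokes above the supporting line just below $\mu$ when $\mu>1/b_2$), but the paper's particular spread keeps $F(t)\le \tfrac{b_1-\delta}{b_1-t}$ with equality, so $W(p)=(b_1-\delta)p$ remains exactly on the origin line throughout $[1/b_2,\mu]$ and the concavification at $\mu$ is unaffected. This does not affect the validity of your construction, which is complete as stated.
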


In order to prove Theorem \ref{main_res_bintype_theorem} we make use of the min-max theorem of \citet{sion1958general}, and write the overall regret in a Bayesian manner. As a first step, let us consider an alternative model for price discrimination with incomplete information about the seller's valuation---the \emph{Bayesian} model, in which we assume that $\us$ is drawn from a commonly known prior distribution $F$, where $F$ is the CDF of the prior distribution. The segmentation problem now can be viewed as a standard Bayesian persuasion problem as introduced by \cite{kamenica2011bayesian}. The unknown state is the buyer's valuation $b$ which is drawn from a common prior distribution $\mu$. The market designer (the \emph{sender} in persuasion) knows the state and chooses a segmentation (a \emph{signaling policy} in persuasion).
The seller is the receiver.
The market designer's utility as a function of the seller's posterior (the \emph{indirect utility}) is given by

\begin{equation*}
    u_F(p)=\mathbb{E}_{\us\sim F}[U(p,\us)]
\end{equation*}

Building upon \cite{aumann1995repeated},  \cite{kamenica2011bayesian} elegantly characterize the solution of this persuasion problem via the notion of concavification, which is denoted by $cav$.\footnote{The concavification of $u$ is defined to be the minimal concave function that is pointwise above $u$.} The optimal expected utility of the market designer is $cav(u_F)(\mu)$. Moreover, a segmentation $\sigma \in \Sigma$ is optimal if and only if $(\mu, cav(u_F)(\mu))$ is a convex combination of $(p, u_F(p))_{p \in supp(\sigma)}$, with weights corresponding to $(\sigma(p))_{p \in supp(\sigma)}$.

In the case of $n=2$ we can easily derive a closed form of the designer's indirect utility function.
Denote by $t(p)$ the threshold $\us$ for which the seller is indifferent between prices $b_1$ and $b_2$, when the posterior belief is $p = \prob(b_1)$. It holds that:

\begin{equation*}
    b_1 - t(p) = (b_2 - t(p)) \cdot (1-p) \Rightarrow t(p) = \frac{b_1 - b_2 \cdot (1-p)}{p} = b_2 - \frac{1}{p}
\end{equation*}

Now, notice that the buyer surplus is non-zero if and only if $\us \le t(p)$ and $b = b_2$. In this case, the surplus is $b_2 - b_1$. When $\us \ge t(p)$ and $b = b_1$ there is no trade, hence zero surplus. When $\us \le t(p)$ and $b = b_1$ (or, symmetrically, when $\us \ge t(p)$ and $b = b_2$) there is trade, but zero surplus $b_1 - b_1$ (or $b_2 - b_2$). Therefore, the indirect utility of the market designer is given by:

\begin{equation*}
    u(p) = (b_2 - b_1) \cdot (1-p) \cdot F(t(p)) = (1-p) \cdot F(t(p))
\end{equation*}
where the last equality is due to the assumption of $b_2 - b_1 = 1$.

Now, going back to our regret-minimization framework, we observe that the overall regret can be written in terms of the indirect utility concavification (as in the Bayesian model described above):

\begin{equation}
\label{bayesian_reg_eq}
    R = \sup_{F}\{\E_{\us \sim F}[U^*(\us)] - cav(u_F)(\mu)\}
\end{equation}

Equation \eqref{bayesian_reg_eq} follows from the min-max theorem of \cite{sion1958general}.\footnote{Notice that our instance falls into the min-max theorem of \cite{sion1958general}, as the objective is linear in $\sigma$ and lower-semi continuous in the adversary’s mixed strategy. This is because $U(\sigma,\us)$ is upper-semi continuous in $\us$, $U^*(\us)$ is continuous in $\us$, and $U(\sigma,\us)$ is linear in $\sigma$.} We view the interaction as a zero-sum game between the market designer (who chooses a segmentation) and the adversary (who chooses $\us$). By the min-max theorem, there exists a mixed strategy of the adversary $F$ that guarantees the value of the game $R$ where $cav(u_F)(\mu)$ is the best reply of the market designer against the mixed strategy $F$.

The above formulation of the overall regret will now be used to prove Theorem \ref{main_res_bintype_theorem}. In particular, we will show that in the case of binary type case, the adversary has a strategy $F$ in the zero-sum game that guarantees a regret of at least $\frac{U^*(0)}{e}$, which proves the tightness of the bound obtained in Theorem \ref{main_res_theorem}.

\begin{proof}[\textbf{Proof of Theorem \ref{main_res_bintype_theorem}.}]

Consider a zero-sum game between the market designer and an adversary, in which the market designer selects a market segmentation $\sigma$ and the adversary selects $\us$ to maximize the regret. Note that unlike the game defined in the proof of Theorem \ref{main_res_theorem}, in this game the market designer is not restricted to BBM segmentations, and can choose any arbitrary market segmentation $\sigma \in \Sigma$. Denote the value of this game by $R$, and note that the value of this game is the overall regret. From Theorem \ref{main_res_theorem}, $R \le \frac{U^*(0)}{e}$. It is therefore left to show that the adversary has a mixed strategy that guarantees a utility of at least $\frac{U^*(0)}{e}$.

For each $\beta \in [b_2 - \frac{1}{\mu}, b_1)$, define a distribution $F_\beta$ such that $supp(F_\beta) = [0, \beta]$:

    \begin{equation*}
        \forall t \in [0, \beta]: F_\beta(t) = \frac{b_1 - \beta}{b_1 - t}
    \end{equation*}

Notice that the distribution has an atom at $x = 0$, and the corresponding density function is given by $f_\beta(t) = \frac{b_1 - \beta}{(b_1 - t)^2}$. Now, under the assumption that $\us \sim F_\beta$, the indirect utility of the market designer, as a function of the belief $p = \prob_p(b_1)$, is given by:

    \begin{equation*}
    u_\beta(p) = (1 - p) F_\beta(t(p))
    \end{equation*}
    
    Plugging in $F_\beta$, we obtain:
    
\begin{equation*}
    u_\beta(p) = \left\{\begin{array}{lr}
        0, & \text{for } p < \frac{1}{b_2}\\
        (b_1 - \beta) p, & \text{for } \frac{1}{b_2} \le p \le \frac{1}{b_2 - \beta}\\
        1 - p, & \text{for } p > \frac{1}{b_2 - \beta}
        \end{array}\right\}
\end{equation*}

The concavification of the indirect utility has the following form:

    \begin{equation*}
        cav(u_\beta)(p) = \left\{\begin{array}{lr}
            (b_1 - \beta) p, & \text{for } p \le \frac{1}{b_2 - \beta}\\
            1 - p, & \text{for } p \ge \frac{1}{b_2 - \beta}
            \end{array}\right\}
    \end{equation*}

\begin{figure}[t]
\centering
\begin{tikzpicture}
    \begin{axis}[
        axis lines = left,
        xlabel = $p$,
        xtick={0,0.25,0.333,1},
        xticklabels={0,$\frac{1}{b_2}$,$\frac{1}{b_2 - \beta}$,1},
        ytick=\empty, 
        thick,
        legend pos=north east,
        legend cell align={left} 
    ]
    
    \addplot [domain=0:1/4, samples=100, color=blue, thick, forget plot] {0};
    \addplot [domain=1/4:1/3, samples=100, color=blue, thick, forget plot] {2*x};
    \addplot [domain=1/3:1, samples=100, color=blue, thick] {1-x};
    \addlegendentry{$u_{\beta}(p)$} 
    
    \addplot [domain=0:1/3, samples=100, color=orange, thick, dashed, forget plot] {2*x}; 
    \addplot [domain=1/3:1, samples=100, color=orange, thick, dashed] {1-x};
    \addlegendentry{$cav(u_{\beta})(p)$} 

\end{axis}
\end{tikzpicture}
\caption{The indirect utility of the market designer $u_{\beta}(p)$ and its concavification, for a given distribution $F_\beta$ over $\us$.}
\label{fig2}
\end{figure}
 
 Note that the concavification has the structure of a triangle (see Figure \ref{fig2}): for $p \le \frac{1}{b_2 - \beta}$ it is a linear function with slope $(b_1 - \beta)$, and for $p > \frac{1}{b_2 - \beta}$ it is a linear function with slope $-1$. 
 Using the distribution family $\{F_\beta\}_{\beta < b_1}$ combined with Equation \eqref{bayesian_reg_eq}, we can obtain the following lower bound on the overall regret:
    
    \begin{equation}
    \label{lb_eq}
        R \ge \sup_{b_2 - \frac{1}{\mu} \le \beta < b_1}R(F_\beta)
    \end{equation}

Since we chose $\beta \ge b_2 - \frac{1}{\mu}$, we know that $cav(u_\beta)(\mu) = (b_1 - \beta) \mu$. It is now left to compute $\E_{\us \sim F_\beta}[U^*(\us)]$, and here we distinguish between two cases:

\textbf{First case: $\mu \le \frac{1}{b_2}$.} In this case, we know that $b_2 - \frac{1}{\mu} \le 0$. Therefore, for every $\us$ in the support of $F_\beta$, it holds that $\us \ge 0 \ge b_2 - \frac{1}{\mu}$, and Lemma \ref{bbm_lemma} implies that $U^*(\us) = (b_1-\us) \mu$. Now,

    \begingroup
    \allowdisplaybreaks
    \begin{align*}
        \E_{\us \sim F_\beta}[U^*(\us)] &= F_\beta(0) \cdot U^*(0) + \int_{\us=0}^{\beta} U^*(\us) f_\beta (\us) d\us \\
        &= (b_1 - \beta) \mu + \int_{\us=0}^{\beta} U^*(\us) f_\beta (\us) d\us \\ 
        &= (b_1 - \beta) \mu + \mu \int_{\us=0}^{\beta} (b_1 - \us) \cdot \frac{b_1 - \beta}{(b_1 - \us)^2} d\us \\ 
        &= (b_1 - \beta) \mu + \mu \int_{\us=0}^{\beta} \cdot \frac{b_1 - \beta}{b_1 - \us} d\us \\ 
        &= (b_1 - \beta) \mu \bigg( 1 + \ln(b_1) - \ln(b_1 - \beta) \bigg) \\
        &= (b_1 - \beta) \mu \bigg( 1 + \ln\bigg(\frac{b_1}{b_1 - \beta}\bigg)\bigg)
    \end{align*}
    \endgroup

\textbf{Second case: $\mu > \frac{1}{b_2}$.} In this case, Lemma \ref{bbm_lemma} implies that for  $\us \in [0, b_2 - \frac{1}{\mu}]$, the optimal surplus is $U^*(\us) = 1 - \mu$, and otherwise $U^*(\us) = (b_1 - \us) \mu$. In this case, we obtain:

    \begingroup
    \allowdisplaybreaks
    \begin{align*}
        \E_{\us \sim F_\beta}[U^*(\us)] &= \int_{\us=0}^{\beta} U^*(\us) f_\beta (\us) d\us \\
        &= \int_{\us=0}^{b_2 - \frac{1}{\mu}} U^*(\us) f_\beta (\us) d\us + \int_{\us=b_2 - \frac{1}{\mu}}^{\beta} U^*(\us) f_\beta (\us) d\us \\
        &= (1-\mu) F_\beta\bigg(b_2 - \frac{1}{\mu}\bigg) + \mu \int_{\us=b_2 - \frac{1}{\mu}}^{\beta} (b_1 - \us) \cdot \frac{b_1 - \beta}{(b_1 - \us)^2} d\us \\
        &= (1-\mu) F_\beta\bigg(b_2 - \frac{1}{\mu}\bigg) + \mu \int_{\us=b_2 - \frac{1}{\mu}}^{\beta} \frac{b_1 - \beta}{b_1 - \us} d\us \\
        &= (1-\mu) F_\beta\bigg(b_2 - \frac{1}{\mu}\bigg) + \mu (b_1 - \beta) \bigg( \ln\bigg(b_1 - b_2 + \frac{1}{\mu}\bigg) - \ln(b_1 - \beta) \bigg) \\
        &= (1-\mu) F_\beta\bigg(b_1 - \frac{1 - \mu}{\mu}\bigg) + \mu (b_1 - \beta) \bigg( \ln\bigg(\frac{1 - \mu}{\mu}\bigg) - \ln(b_1 - \beta) \bigg) \\
        &= \mu (b_1 - \beta) + \mu (b_1 - \beta)\ln\bigg(\frac{1 - \mu}{\mu(b_1 - \beta)}\bigg)
    \end{align*}
    \endgroup

Altogether, the regret with respect to $F_\beta$ as a function of $\mu$ is given by:

    \begin{equation*}
        R(F_\beta) = \left\{\begin{array}{lr}
            \mu (b_1 - \beta) \ln\big(\frac{b_1}{b_1 - \beta}\big), & \text{for } \mu \le \frac{1}{b_2}\\
            \mu (b_1 - \beta)\ln\big(\frac{1 - \mu}{\mu(b_1 - \beta)}\big), & \text{for } \mu > \frac{1}{b_2}
            \end{array}\right\}
    \end{equation*}

Now, let $\beta^* = b_1 \bigg( 1 - \frac{1}{e} \bigg)$ if $\mu \le \frac{1}{b_2}$, otherwise $\beta^* = b_1 - \frac{1 - \mu}{e\cdot \mu}$. Plugging into Equation \eqref{lb_eq} we obtain the following lower bound:

    \begin{equation*}
        R \ge \sup_{b_2 - \frac{1}{\mu} \le \beta < b_1}R(F_\beta) \ge R(F_{\beta^*}) = \left\{\begin{array}{lr}
            \frac{b_1 \mu}{e}, & \text{for } \mu \le \frac{1}{b_2}\\
            \frac{1-\mu}{e}, & \text{for } \mu > \frac{1}{b_2}
            \end{array}\right\} = \frac{U^*(0)}{e}
    \end{equation*} 

\end{proof}

The result is somewhat surprising. To see why, consider the following example in a Bayesian price discrimination model:

\begin{example}
\label{example2}
    Consider the case where $b_2 = 4, b_1 = 3$ and $\us \sim Uni([0,1] \cup [2 \frac{1}{2},3 \frac{1}{2}])$. Note that $t(p) = 4 - \frac{1}{p} \in [0, 1]$ if and only if $p \in [\frac{1}{4}, \frac{1}{3}]$, and $t(p) \in [2 \frac{1}{2},3 \frac{1}{2}]$ if and only if $p \ge \frac{2}{3}$. Therefore, the indirect utility of the market designer is given by:

    \begin{equation*}
            u(p) = \left\{\begin{array}{lr}
            0, & \text{for } p \in [0,\frac{1}{4}]\\
            \frac{(1-p)(4p-1)}{2p}, & \text{for } p \in [\frac{1}{4},\frac{1}{3}]\\
            \frac{1-p}{2}, & \text{for } p \in [\frac{1}{3},\frac{2}{3}]\\
            \frac{1-p}{2} \cdot (4 - \frac{1}{p} - \frac{3}{2}), & \text{for } p \in [\frac{2}{3},1]\\
            \end{array}\right\}
    \end{equation*}

    Figure \ref{fig_example_2} visualizes the indirect utility and its concavification, and it can be now seen that for certain priors $\mu$, the unique optimal segmentation is not a BBM segmentation. For instance, if $\mu = \frac{2}{3}$, the optimal segmentation is a mixture of two mixed posterior, which is different than the optimal segmentation as in \cite{bergemann2015}, hence cannot be induced by any distribution over $s_D$.
\end{example}

\begin{figure}[t]
\centering
\begin{tikzpicture}
    \begin{axis}[
        axis lines = left,
        xlabel = $p$,
        xtick={0,0.25,0.33,0.667,1},
        xticklabels={0,$\frac{1}{4}$,$\frac{1}{3}$,$\frac{2}{3}$,1},
        ytick=\empty, 
        thick,
        legend pos=north east,
        legend cell align={left} 
    ]
    
    \addplot [domain=0:1/4, samples=100, color=blue, thick, forget plot] {0};
    \addplot [domain=1/4:1/3, samples=100, color=blue, thick, forget plot] {((1-x)*(4*x-1))/(2*x)};
    \addplot [domain=1/3:2/3, samples=100, color=blue, thick, forget plot] {(1-x)/2};
    \addplot [domain=2/3:1, samples=100, color=blue, thick] {(1-x)/2 * (4 - 1/x - 3/2)};
    \addlegendentry{$u(p)$} 
    
    \addplot [domain=0:1/3, samples=100, color=orange, thick, dashed, forget plot] {x};
    \addplot [domain=1/3:0.85, samples=100, color=orange, thick, dashed, forget plot] {1/3-(x-1/3)/2.25};
    \addplot [domain=0.85:1, samples=100, color=orange, thick, dashed] {(1-x)/2 * (4 - 1/x - 3/2)};
    \addlegendentry{$cav(u)(p)$} 
    
    \end{axis}
\end{tikzpicture}
\caption{The indirect utility of the market designer and its concavification, corresponding to Example \ref{example2}.}
\label{fig_example_2}
\end{figure}

Example \ref{example2} demonstrates that an optimal market segmentation is not necessarily a BBM segmentation.
However, Theorem \ref{main_res_bintype_theorem} shows that at least in the binary buyer type case, the adversary has an optimal strategy for which the designer's best reply is indeed a BBM segmentation. That is, in this case there always exists a regret-minimizing segmentation, which is a BBM segmentation.

\subsection{Improved Regret Beyond the Case of Binary Buyer Type}

While our BBM segmentation approach is regret-minimizing for the case of binary buyer valuation, it might be suboptimal when applied to instances with more than two valuations, as illustrated by the following example.

\begin{example}
\label{example:unif3}
Consider the case where the prior $\mu$ is uniform over the buyer valuations $\{1, 2, 3\}$. To evaluate the tightness of the theoretical regret bound $\frac{U^*(0)}{e}$, we numerically solve the corresponding zero-sum game between the adversary (who chooses the seller's valuation $s$) and the designer (who chooses a segmentation $\sigma$, not restricted to BBM). 
We discretize the strategy spaces of both players and formulate the interaction as a linear program (LP). Solving the LP yields an approximate game value (corresponding to the overall regret) of $\approx 0.165$, which is significantly lower than the theoretical upper bound $\frac{U^*(0)}{e} \approx 0.245$. 
The resulting optimal (approximated) regret-minimizing segmentation puts an aggregated mass of $\approx 0.1$ on segments of the form $(p,0,1-p)$ for some $p \in (0,1)$, which cannot be included in any BBM segmentation.\footnote{The reason is that for $(p,0,1-p)$ to be included in a BBM segmentation, there must exist a seller valuation $s$ for which the seller is indifferent between setting the monopolistic price (which can only be either $b_2=2$ or $b_3=3$) and the lowest price in the segment, which is $b_1=1$. This can only happen if the monopolistic price is $b_3$ (as the segment puts zero mass on $b_2$), but this implies that $s > 1$, in contradiction to $b_1$ being an optimal price in $(p,0,1-p)$.}
\end{example}

We recall that in the definition of a BBM segmentation (Definition \ref{def:BBM}), after drawing $s_D$, the designer chooses a buyer-optimal segmentation with respect to the "guessed" seller value $s_D$. However, such a buyer-optimal segmentation is not necessarily unique. Example \ref{example:unif3} demonstrates that the choice of the buyer-optimal segmentation might affect the regret it actually achieves.

One buyer-optimal segmentation suggested by \cite{bergemann2015} is the \emph{equal revenue} segmentation in which, within each segment, the seller is indifferent between all the prices in the support of this segment. As shown by \cite{bergemann2015}, such an optimal segmentation always exists. 
A key observation is that in an equal revenue segment (w.r.t $s_D$), if $s>s_D$, then the seller chooses the maximal price in the support, and consequently, the buyer's revenue is 0. This is captured by the following lemma, whose proof is deferred to Appendix \ref{app:proof_of_lemma_eq_rev}.

\begin{lemma}
    \label{lemma:equal_revenue_segments}
    Let $p = (p_1, ... p_n) \in \Delta(B)$ be an equal revenue segment with respect to seller's valuation $s_D$; that is, for all $i, j \in [n]$ it holds that:
    \begin{equation*}
        (b_i-s_D) \sum_{k=i}^n p_k = (b_j-s_D) \sum_{k=j}^n p_k
    \end{equation*}
    Then, it holds that:
    \begin{equation*}
        \pi(p;s) = \left\{\begin{array}{lr}
            \max \supp (p), & \text{if } s > s_D\\
            \min \supp (p), & \text{if } s \le s_D
            \end{array}\right\}
    \end{equation*}
\end{lemma}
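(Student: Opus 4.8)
The plan is to read the seller's objective as a pencil of straight lines in the variable $s$ and to exploit the equal-revenue hypothesis, which forces all of these lines to meet at a single point. For a price $b_i$, write $M_i \coloneqq \sum_{k=i}^n p_k$ for the probability of sale, so that the seller's objective at price $b_i$ is the affine function $\phi_i(s) \coloneqq (b_i-s)M_i$, with slope $-M_i$, and $\pi(p;s)=\min\argmax_i \phi_i(s)$. The equal-revenue condition at $s_D$ says exactly that $\phi_i(s_D)=R$ for a common value $R$ and every price in the support; geometrically, every line $\phi_i$ passes through the single point $(s_D,R)$. Thus the entire family of objectives collapses to a pencil through one point, and the whole comparison reduces to comparing slopes.

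The key observation I would use is that the slopes are strictly ordered across the support: for support indices $i<j$ one has $M_i-M_j=\sum_{k=i}^{j-1}p_k\ge p_i>0$, so $M_i>M_j$ and $\phi_i$ is strictly steeper (more negative slope) than $\phi_j$. For a pencil of lines through $(s_D,R)$, ranking the values at a point $s$ is then immediate: to the right ($s>s_D$) the flattest line is largest, namely the one of largest index $b=\max\supp(p)$; to the left ($s<s_D$) the steepest line is largest, namely $b=\min\supp(p)$; and at $s=s_D$ all support lines coincide at $R$, so the $\min\argmax$ tie-breaking rule selects $\min\supp(p)$. This already yields the two cases of the lemma restricted to prices in the support, and in particular explains why the threshold between the two regimes is precisely $s_D$, independent of everything except the ordering of the masses.

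What remains is to rule out prices outside the support, and this is the only step requiring care. A price $b_i$ weakly below the smallest support point carries the full mass $M_i=1=M_{\min\supp(p)}$, so $\phi_i(s)=b_i-s\le\phi_{\min\supp(p)}(s)$; a price strictly between two consecutive support points shares the mass $M_j$ of the next support point $b_j$ above it, whence $\phi_i(s)=(b_i-s)M_j\le(b_j-s)M_j=\phi_j(s)$ since $b_i\le b_j$ and $M_j\ge 0$ (regardless of the sign of $b_j-s$); thus every such price is dominated by a support price. Finally, a price strictly above $\max\supp(p)$ yields no sale and value $0$; writing $\phi_{\max\supp(p)}(s)=R-M_{\max\supp(p)}(s-s_D)$ and using $R>0$ one checks that $\phi_{\max\supp(p)}(s)>0$ exactly for $s<\max\supp(p)$, so the selected support price beats the no-sale alternative on the relevant range (and when the support is all of $B$ there is no such price above it, so the conclusion holds verbatim for every $s$).

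Combining the domination of non-support prices with the pencil argument gives $\pi(p;s)=\max\supp(p)$ for $s>s_D$ and $\pi(p;s)=\min\supp(p)$ for $s\le s_D$, as claimed. I expect the main obstacle to be not the core geometric step, which is essentially a one-line slope comparison once the common point is identified, but rather the bookkeeping of the non-support prices and the no-sale option together with verifying the positivity $R>0$ in the regime where these segments arise; the remainder follows from the single observation that equal revenue turns the family $\{\phi_i\}$ into a pencil through $(s_D,R)$.
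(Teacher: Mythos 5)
Your proof is correct and is essentially the paper's argument: both reduce to the observation that the equal-revenue condition makes all revenue functions $(b_i-s)\sum_{k\ge i}p_k$ agree at $s=s_D$, so the comparison at any other $s$ is decided purely by the strict ordering of the tail masses (the paper writes this as an algebraic expansion in $\delta=s-s_D$, you as a pencil of lines through the common point $(s_D,R)$). Your extra bookkeeping for non-support prices and the no-sale option is somewhat more careful than the paper, which instead invokes the equal-revenue identity for arbitrary pairs $i,j\in[n]$ as stated in the lemma and does not separately discuss prices above $\max\supp(p)$.
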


An immediate corollary from the lemma is that the zero-sum game in Equation \eqref{eq_thrm_1_relaxed_utils} captures \emph{exactly} the utilities of the interaction in case the chosen buyer-optimal segmentation is the equal revenue one, meaning that the desginer cannot lower the regret below $\frac{U^*(0)}{e}$ using equal revenue BBM segmentation.

\begin{corollary}
    Assume that the designer is forced to use BBM segmentations with an equal revenue implementation. Then, the overall regret (subject to this restriction) is $\frac{U^*(0)}{e}$.
\end{corollary}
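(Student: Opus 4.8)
The plan is to show that, under the equal revenue implementation, the upper-bounding function $\tilde{v}$ from the proof of Theorem \ref{main_res_theorem} coincides \emph{exactly} with the true payoff $v$ of Equation \eqref{eq_thrm_1_orig_utils}, so that no slack is introduced by the relaxation and the value $\frac{U^*(0)}{e}$ is both achieved and unbeatable. Concretely, I would fix a drawn seller guess $s_D$, write $\sigma^*(s_D)$ for its equal revenue implementation, and evaluate the realized buyer surplus $U(\sigma^*(s_D),\us)$ segment by segment via Lemma \ref{lemma:equal_revenue_segments}.

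First I would treat the case $\us > s_D$. Lemma \ref{lemma:equal_revenue_segments} gives $\pi(p;\us)=\max\supp(p)$ in every equal revenue segment $p$. Since every buyer type in the support satisfies $b_j \le \max\supp(p)$, the per-segment surplus $\sum_j p_j\max\{b_j-\pi(p;\us),0\}$ vanishes, so $U(\sigma^*(s_D),\us)=0$ and $v(\us,s_D)=U^*(\us)=\tilde{v}(\us,s_D)$. Next I would treat the case $\us \le s_D$. Here Lemma \ref{lemma:equal_revenue_segments} gives $\pi(p;\us)=\min\supp(p)=\pi(p;s_D)$, i.e. the seller posts exactly the same price as at $\us=s_D$ in every segment. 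Because the per-segment surplus depends on $\us$ only through the posted price, this yields $U(p;\us)=U(p;s_D)$ for every segment, hence $U(\sigma^*(s_D),\us)=U(\sigma^*(s_D),s_D)=U^*(s_D)$ (the last equality because $\sigma^*(s_D)$ is buyer-optimal at $s_D$), and therefore $v(\us,s_D)=U^*(\us)-U^*(s_D)=\tilde{v}(\us,s_D)$. Together the two cases establish $v\equiv\tilde{v}$ on the whole domain.

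Having identified the payoffs, the zero-sum game in which the designer draws $s_D$ under the equal revenue implementation and the adversary picks $\us$ is literally the game defined by $\tilde{v}$ in Equation \eqref{eq_thrm_1_relaxed_utils}. By the argument in the proof of Theorem \ref{main_res_theorem}, after eliminating strategies above $b_{n-1}$ this game satisfies the hypotheses of Lemma \ref{main_tech_lemma} with $u=U^*$, so its value is exactly $\frac{U^*(0)}{e}$. Since this value equals the overall regret subject to the equal revenue BBM restriction, the upper bound $\le\frac{U^*(0)}{e}$ is attained by the hazard (Player 2) strategy of Lemma \ref{main_tech_lemma}, while the matching lower bound $\ge\frac{U^*(0)}{e}$ follows because the adversary's optimal mixed strategy guarantees it against every $s_D$ the designer could draw.

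The hard part will be the second case: one must verify that the buyer surplus at $\us\le s_D$ truly \emph{equals} the surplus at $s_D$ rather than merely being bounded below by it. This is exactly where the equal revenue structure is essential—it forces the posted price to be the constant $\min\supp(p)$ throughout the interval $\us\le s_D$, so the inequality $U(\sigma^*(s_D),\us)\ge U^*(s_D)$ exploited in Theorem \ref{main_res_theorem} tightens to an equality. For a generic (non-equal-revenue) buyer-optimal segmentation the price could strictly drop as $\us$ decreases, making the surplus strictly larger and the relaxation strict; the content of the corollary is precisely that this slack disappears under the equal revenue implementation.
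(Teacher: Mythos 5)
Your proof is correct and follows essentially the same route as the paper's: use Lemma \ref{lemma:equal_revenue_segments} to show that under the equal revenue implementation the relaxed payoff $\tilde{v}$ of Equation \eqref{eq_thrm_1_relaxed_utils} coincides exactly with the true payoff $v$, so the value $\frac{U^*(0)}{e}$ of the relaxed game is the exact restricted regret. You are in fact slightly more complete than the paper, which only explicitly justifies the $\us > s_D$ case; your verification that for $\us \le s_D$ the posted price stays at $\min\supp(p)$, so that $U(\sigma^*(s_D),\us)=U^*(s_D)$ with equality, is the part the paper leaves implicit.
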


\begin{proof}
    Going back to the proof of Theorem \ref{main_res_theorem}, the term in Equation \eqref{eq_thrm_1_orig_utils} is the adversary's utility in the resulting zero-sum game and, not just an upper bound (since the designer is restricted to BBM segmentations). Moreover, by Lemma \ref{lemma:equal_revenue_segments}, the "relaxed" utility term in Equation \eqref{eq_thrm_1_relaxed_utils} equals to the adversary's utility, since whenever the designer's guess is "over-optimistic" (namely, $s>s_D$) the buyer surplus indeed becomes zero. Hence, solving for the game defined by  Equation \eqref{eq_thrm_1_relaxed_utils} yields the actual value of the game (and correspondingly, the regret) in the case where the designer must use a BBM segmentation with equal revenue implementation.
\end{proof}

\begin{example}
    \label{example:unif3_cont}
Returning to Example \ref{example:unif3}, the equal revenue choice yields exactly the regret of $\frac{U^*(0)}{e} \approx 0.245$.
Alternatively, in a simulation, using a generic LP solver to find an optimal segmentation for a given $s_D$, the resulting zero-sum game (with real-valued action spaces for the designer and the adversary) admits a value of $\approx 0.229$.
This suggests that the specific implementation of the BBM segmentation for a given $s_D$ affects the actual regret it achieves. Nevertheless, the regret gap induced by the choice of the specific $s_D$-optimal segmentation ($\approx 0.245-0.229$), in this example, is substantially smaller than the gap resulting from restricting the designer to BBM segmentations ($\approx 0.229-0.165$).

\end{example}

\section{Experimental Results}
\label{expriments_section}
We now turn to evaluate our robust segmentation by computing the surplus it guarantees to buyers in different markets. For this experimental setting, we consider the Bayesian model, and given a pair of buyer and seller distributions we compute the \emph{expected optimal buyer surplus} $ \E_{\us \sim F} [U^*(\us)]$, and the \emph{expected robust buyer surplus} $\E_{\us \sim F} [U(\sigma,\us)]$, where $\sigma$ is the robust market segmentation that guarantees the overall regret upper bound of Theorem \ref{main_res_theorem}. While the first reflects the surplus obtained by a market designer who knows the exact valuation of the seller (and is realized by applying the algorithm of \citealp{bergemann2015}), the latter is the result when the designer is devoid of any knowledge, including the seller's valuation distribution. This is the surplus that is obtained by our main result.

We evaluate our robust segmentation with respect to markets in which the seller and buyer distributions are identical, meaning that the seller's valuation $\us$ is sampled from the distribution $\mu$ (independently from the buyer's valuation). Following the work of \cite{coad2009distribution}, we consider two distribution families that represent actual product quality distributions in markets: the Pareto distribution and the Lognormal distribution. In the following simulations, we performed a discretization of these continuous distributions, using $n=15$ discrete values that approximate the continuous distribution.\footnote{Transforming a continuous distribution $F$ into a discrete random variable with support of size $n$ is done by taking the values to be $b_i = \frac{i}{n} \cdot F^{-1}(1-\epsilon)$ for a small $\epsilon > 0$, with weights corresponding to $F(b_i) - F(b_{i-1})$, where we define $b_0 = 0$.} For the Pareto distribution, we run simulations with varying parameter \texttt{alpha}, and for the lognormal distribution, we fix the expectation of the distribution across all experiments and run simulations with varying parameter \texttt{sigma}.\footnote{More precisely, \texttt{sigma} is the standard deviation for the random variable $X \sim 
\mathcal{N}(\texttt{mu}, \texttt{sigma}^2)$, for which $Y=e^X$ has a lognormal distribution (where we fix $\texttt{mu}=1$).} Figure \ref{fig:experiment_fig} shows the expected optimal surplus achieved by the algorithm of \cite{bergemann2015} (assuming $\us$ is known), and the expected robust surplus achieved by our segmentation (that relies on knowing the true $\us$). Notably, for all tested distributions, the robust surplus provides a good approximation of the optimal surplus. 

\begin{figure}[t]
  \centering
  \begin{subfigure}[b]{0.45\linewidth}
    \includegraphics[width=\linewidth]{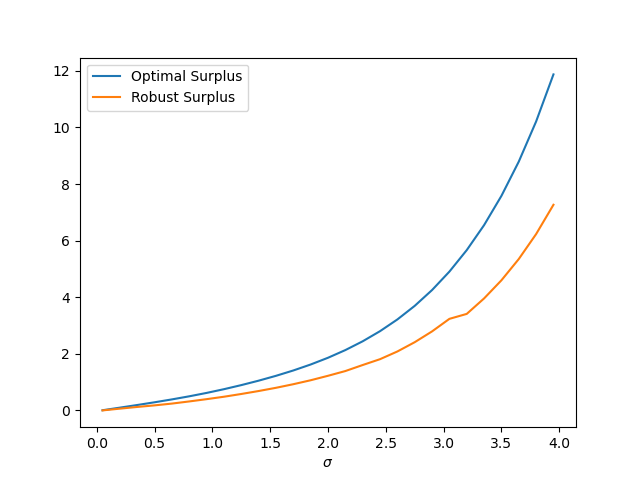}
    \caption{Seller and buyer have a Lognormal distribution with fixed expectation and standard deviation \texttt{sigma}.}
    \label{fig:experiment_lognormal_subfig}
  \end{subfigure}
  \hspace{0.5cm} 
  \begin{subfigure}[b]{0.45\linewidth}
    \includegraphics[width=\linewidth]{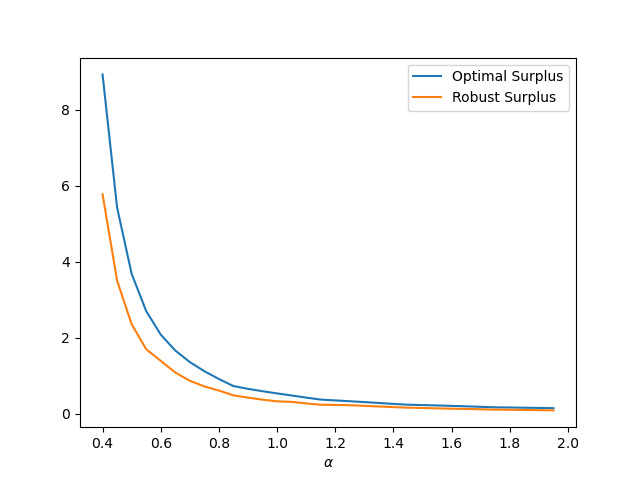}
    \caption{Seller and buyer have a Pareto distribution with parameter \texttt{alpha}.}
    \label{fig:experiment_pareto_subfig}
  \end{subfigure}
  \caption{Expected optimal and robust surplus as a function of a shared seller and buyer distribution parameter.}
  \label{fig:experiment_fig}
\end{figure}

\begin{figure}[t]
  \centering
  \begin{subfigure}[b]{0.45\linewidth}
    \includegraphics[width=\linewidth]{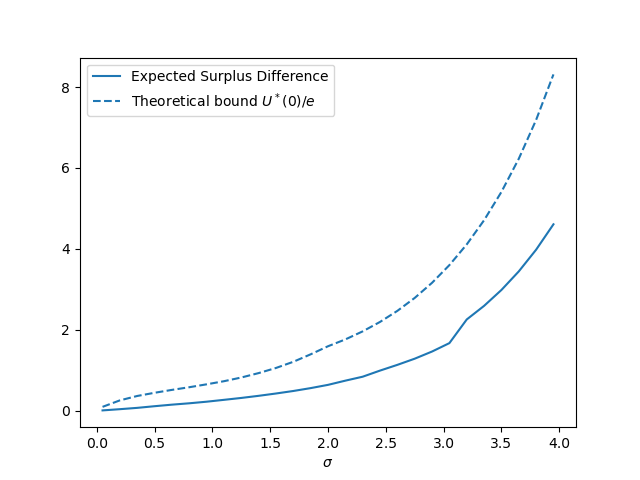}
    \caption{Seller and buyer have a Lognormal distribution with fixed expectation and standard deviation \texttt{sigma}.}
    \label{fig:experiment_lognormal_subfig2}
  \end{subfigure}
  \hspace{0.5cm} 
  \begin{subfigure}[b]{0.45\linewidth}
    \includegraphics[width=\linewidth]{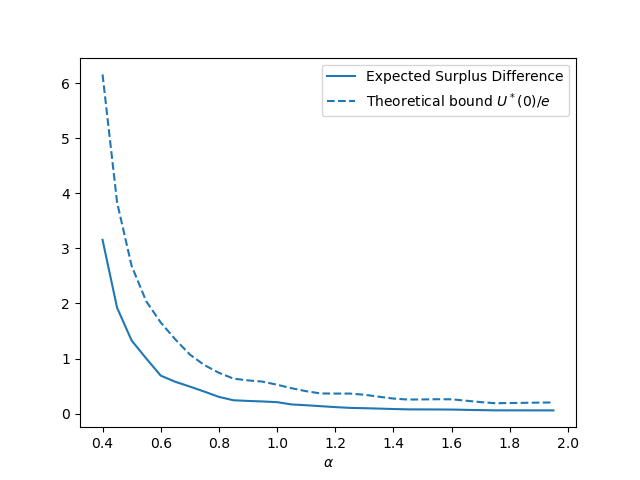}
    \caption{Seller and buyer have a Pareto distribution with parameter \texttt{alpha}.}
    \label{fig:experiment_pareto_subfig2}
  \end{subfigure}
  \caption{Expected difference between optimal and robust surplus, and its theoretical upper bound, as a function of a shared seller and buyer distribution parameter.}
  \label{fig:experiment2_fig}
\end{figure}

Our main result guarantees that the maximal difference between the optimal surplus and the robust surplus is at most $\frac{U^*(0)}{e}$. Our experiment reveals that in practice, for these realistic seller and buyer distributions, the expected difference between the two terms is significantly lower: Figure \ref{fig:experiment2_fig} shows this expected difference $\E_{\us \sim F}[U^*(\us) - U(\sigma,\us)]$ compared to the theoretical upper bound $\frac{U^*(0)}{e}$, and demonstrates that the actual expected difference is much lower in practice. This implies that in some settings (which might be considered as realistic, following \citealp{coad2009distribution}) our robust segmentation performs significantly better than our worst-case bound on the overall regret.

\section{Extensions}
\label{sec:extensions}

Thus far, we have introduced a model of robust price discrimination, where the seller’s valuation is unknown to the market designer, who aims to segment the market to minimize worst-case regret. By formulating the interaction as a zero-sum game, where an adversary selects the true seller valuation and the designer chooses a hypothetical one to guide segmentation, we derived an upper bound on the overall regret. In this section, we present two variants of the model and demonstrate how the same technique can be adapted to obtain corresponding regret upper bounds.

\subsection{The Case of Restricted Seller Valuation}
\label{subsec: restricted_seller_val}

Our construction of the robust market segmentation (and its corresponding regret upper bound) assumes that the designer knows nothing about the seller's product valuation. However, in many practical applications, the designer might have some partial knowledge that could be useful in the context of regret-minimization. One notable case is when the designer knows that the seller's valuation $s$ lies within a given interval $[\underline{s}, \overline{s}] \subset [0, b_{n-1}]$. 
We now show that if the designer knows such lower and upper bounds on the seller's valuation, she can adjust her segmentation to obtain a tighter upper bound on the overall regret.

A first immediate observation is that, without loss of generality, one can normalize the lower bound $\underline{s}$ to be zero, since by subtracting $\underline{s}$ from all of the buyer and seller valuations, we get an equivalent instance of the problem in terms of regret. We therefore focus on the case where the designer only knows that the seller's valuation $s$ is bounded from above by some threshold value $\overline{s} \in (s^*, b_{n-1}]$.\footnote{Note that if $\overline{s} \le s^*$, then the no information policy becomes dominant for the designer, and $\overline{s} > b_{n-1}$ does not provide any useful information for the designer as the optimal strategy of the adversary is supported only on valuations below $b_{n-1}$.}

The case of a known upper bound $\overline{s}$ translates into restricting the strategy space of the adversary in the auxiliary zero-sum game: instead of choosing $s \in [0,b_{n-1}]$, the adversary is now limited to choosing $s \in [0, \overline{s}]$.
This reflects the designer’s partial knowledge about the seller’s valuation and allows for a more refined regret analysis.
Below, we present an adaptation of our main technical lemma to this restricted setting (the proof is deferred to Appendix \ref{app:l2_restricted}):

\begin{lemma}
    \label{main_tech_lemma_restricted_adv}
    Consider the zero-sum game described in Lemma \ref{main_tech_lemma}, except that now Player 1 can only choose $x$ from the interval $[0, \tau]$ for some $\tau \in (\alpha,\beta]$. 
    Let $\delta \in (\alpha,\beta)$ be the unique value such that $u(\delta) = \frac{u(0)}{e}$.
    If $\tau \ge \delta$, then the value of the game is $\frac{u(0)}{e}$ and the optimal strategy $g$ described in Lemma \ref{main_tech_lemma} is still an optimal strategy for Player 2. Otherwise, the value of the game is given by:
    \begin{equation*}
        Val(\tau) = u(\tau) \cdot \ln\bigg( \frac{u(0)}{u(\tau)} \bigg)
    \end{equation*}
    An optimal strategy of Player 2 is given by the following density function:
    \begin{equation*}
        \tilde{g}(y) = \left\{\begin{array}{lr}
        g(y), & \text{for } 
            \alpha \le y \le \tau \\
            0, & \text{otherwise}
        \end{array}\right\}
    \end{equation*}
    with an atom of $1-G(\tau)$ on $\tau$.
\end{lemma}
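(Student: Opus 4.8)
The plan is to analyze the restricted game directly through the two ``indirect payoff'' identities that drive the proof of Lemma \ref{main_tech_lemma}, and then to exhibit matching strategies for the two players in each of the two regimes. If Player 2 uses a mixed strategy with CDF $G$ on $[0,\beta]$, then against a pure $x$ the payoff to Player 1 is
\[
V(x) = u(x) - \int_{[x,\beta]} u(y)\, dG(y),
\]
obtained by splitting the two defining cases of $v$ at $y=x$ and collecting the $u(x)$ terms (the boundary point $y=x$ lies in the $x \le y$ branch and contributes $0$). Dually, if Player 1 uses a mixed strategy with CDF $H$, the payoff to Player 1 against a pure $y$ is
\[
W(y) = \bar u - u(y) H(y), \qquad \bar u := \int u(x)\, dH(x).
\]
Every computation below reduces to elementary integrals of $-u'/u$ and $-u'/u^2$, using that $u$ is constant on $[0,\alpha]$ with $u(\alpha)=u(0)$ and that $G(\tau)=\int_\alpha^\tau -u'/u = \ln\bigl(u(0)/u(\tau)\bigr)$.

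For Case 1 ($\tau \ge \delta$) I would first note the upper bound: Player 2 plays the strategy $g$ of Lemma \ref{main_tech_lemma}, for which $V\equiv u(0)/e$ on $[0,\delta]$ and $V(x)=u(x)<u(0)/e$ on $(\delta,\tau]$, so $\max_{x\in[0,\tau]} V(x)=u(0)/e$. For the matching lower bound I would exhibit Player 1's strategy explicitly---an atom $1/e$ at $0$ together with the density $-\tfrac{u(0)}{e}\,u'(y)/u(y)^2$ on $[\alpha,\delta]$---and check via the $W$-formula that $u(y)H(y)\equiv u(0)/e$ on $[0,\delta]$, so $W\equiv u(0)/e$ there and $W(y)=\bar u-u(y)>u(0)/e$ above $\delta$. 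Since this strategy is supported on $[0,\delta]\subseteq[0,\tau]$ it remains feasible in the restricted game, giving $Val \ge u(0)/e$ and hence equality.

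The substance is Case 2 ($\tau<\delta$). For the upper bound I plug the proposed $\tilde g$ (density $g$ on $[\alpha,\tau]$ plus an atom $1-G(\tau)$ at $\tau$) into the $V$-formula; a direct computation gives, for $x\in[\alpha,\tau]$,
\[
\int_{[x,\tau]} u\, d\tilde G = \bigl(u(x)-u(\tau)\bigr) + \bigl(1-G(\tau)\bigr)u(\tau) = u(x) - G(\tau)u(\tau),
\]
so that $V(x)=G(\tau)u(\tau)=u(\tau)\ln\bigl(u(0)/u(\tau)\bigr)=Val(\tau)$, and the same value arises for $x\in[0,\alpha]$ using $u(x)=u(0)$; thus Player 2 guarantees $\max_{x\in[0,\tau]}V(x)=Val(\tau)$. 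For the lower bound I would construct Player 1's strategy: an atom $u(\tau)/u(0)$ at $0$ and density $-u(\tau)\,u'(y)/u(y)^2$ on $[\alpha,\tau]$ (with no atom at $\tau$). This makes $u(y)H(y)\equiv u(\tau)$ on $[0,\tau]$, so $W=\bar u-u(\tau)$ is constant there; computing $\bar u = u(\tau)+u(\tau)\ln\bigl(u(0)/u(\tau)\bigr)$ yields $W\equiv Val(\tau)$ on $[0,\tau]$ and $W(y)=\bar u-u(y)>Val(\tau)$ for $y>\tau$. Hence $Val(\tau)=u(\tau)\ln\bigl(u(0)/u(\tau)\bigr)$, matching the upper bound.

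The main obstacle will be the bookkeeping around the atoms and the flat region $[0,\alpha]$: I must verify that $\tilde g$ and Player 1's strategy are genuine probability distributions, i.e. non-negative densities, atom masses in $[0,1]$ (this is exactly where $\tau<\delta$ enters, since it gives $u(0)/e<u(\tau)\le u(0)$), and total mass $1$; and I must assign the boundary points $y=x$ and $y=\tau$ (at the atom) to the correct branch of $v$ so that the indifference computations are exact rather than off by the atom's contribution. A useful consistency check to include is that the Case 2 formula limits to $u(0)/e$ as $\tau\to\delta^-$, since $u(\delta)\ln\bigl(u(0)/u(\delta)\bigr)=(u(0)/e)\ln e = u(0)/e$, so the two regimes agree precisely at the threshold $\tau=\delta$.
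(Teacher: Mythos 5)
Your proposal is correct and follows essentially the same route as the paper: both verify the equilibrium pair consisting of $\tilde{g}$ (density $-u'/u$ on $[\alpha,\tau]$ plus the atom at $\tau$) and Player 1's distribution $H(y)=u(\tau)/u(y)$ (the paper's $F(y)=c/u(y)$ with $c=u(\tau)$ fixed by normalization), and both compute the value as $u(\tau)G(\tau)=u(\tau)\ln\bigl(u(0)/u(\tau)\bigr)$. Your write-up is somewhat more explicit than the paper's—in particular, you spell out the lower-bound verification via the $W$-formula and the feasibility of Player 1's unrestricted strategy when $\tau\ge\delta$, which the paper only asserts by reference to the original proof—but the underlying argument is the same.
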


As expected, the value of the game and the optimal strategies of the players remain unchanged when the restriction on the adversary's strategy space does not exclude any part of the support of the optimal (unrestricted) strategy (this is the case where $\tau \ge \delta$). When $\tau$ decreases below $\delta$, then the value of the game also decreases, aligning with the intuition that imposing stricter constraints on the adversary (the maximizing player in the lemma) reduces their ability to induce regret, thereby lowering the game’s value.
To see that the value is monotone in $\tau$, observe that:
\begin{equation*}
    \frac{\partial Val}{\partial \tau} = u'(\tau) \bigg[ \ln\bigg( \frac{u(0)}{u(\tau)} \bigg) - 1 \bigg] = u'(\tau) \cdot \ln\bigg( \frac{u(0)}{e\cdot u(\tau)} \bigg) 
\end{equation*}
For $\tau < \delta$, we have $u(\tau) > \frac{u(0)}{e}$, implying that $\ln\bigg( \frac{u(0)}{e\cdot u(\tau)} \bigg) < 0$. Combining with $u'(\tau) < 0$, we get that the derivative of the value in $\tau$ is positive, hence lowering the threshold $\tau$ reduces the game value. We are now ready to state our result for the case of known upper bound $\overline{s}$ on the seller's valuation:

\begin{proposition}
\label{main_res_prop_restricted}
Assume that the designer knows that the seller's valuation $s$ is bounded from above by $\overline{s} \in (s^*, b_{n-1}].$
    Then, the overall regret is bounded from above by $\min \bigg\{\frac{U^*(0)}{e}, U^*(\overline{s}) \cdot \ln\big( \frac{U^*(0)}{U^*(\overline{s})} \big) \bigg\}$, and this bound is attained by a BBM market segmentation.
\end{proposition}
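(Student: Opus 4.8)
The plan is to reduce Proposition \ref{main_res_prop_restricted} to the restricted technical lemma (Lemma \ref{main_tech_lemma_restricted_adv}) by essentially replaying the relaxation argument in the proof of Theorem \ref{main_res_theorem}, but now with the adversary's strategy space truncated to $[0, \overline{s}]$. First I would set up the same auxiliary zero-sum game: the adversary chooses the true seller valuation $s$, the designer chooses a guess $s_D$ and plays $\sigma^*(s_D)$, and the adversary's payoff is $v(s, s_D) = U^*(s) - U(\sigma^*(s_D), s)$. The only difference from before is that the adversary is now confined to $s \in [0, \overline{s}]$. Exactly as in the proof of Theorem \ref{main_res_theorem}, I would bound $v$ from above by the relaxed payoff $\tilde{v}$ of Equation \eqref{eq_thrm_1_relaxed_utils}, using the non-negativity of the buyer surplus for $s > s_D$ and the monotonicity of $U(\sigma, \cdot)$ together with $U(\sigma^*(s_D), s_D) = U^*(s_D)$ for $s \le s_D$. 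The value of this relaxed game is an upper bound on the overall regret achievable by a BBM designer, and hence on the overall (restricted) regret.

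Next I would invoke Lemma \ref{key_prop_lemma} to identify $U^*$ with the function $u$ of the technical lemmas: $U^*$ is constant on $[0, s^*]$ and strictly decreasing on $[s^*, b_{n-1}]$, so setting $\alpha = s^*$ and $\beta = b_{n-1}$ places us squarely in the hypotheses of Lemma \ref{main_tech_lemma_restricted_adv}, with the adversary (Player 1) restricted to $[0, \tau]$ where $\tau = \overline{s}$. Since the hypothesis $\overline{s} \in (s^*, b_{n-1}]$ gives $\tau \in (\alpha, \beta]$, the lemma applies directly. Let $\delta \in (\alpha, \beta)$ be the unique point with $U^*(\delta) = U^*(0)/e$. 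The lemma then splits into two cases: if $\overline{s} \ge \delta$, the restriction does not bite and the value remains $U^*(0)/e$; if $\overline{s} < \delta$, the value is $U^*(\overline{s}) \cdot \ln\big( U^*(0)/U^*(\overline{s}) \big)$. In either case the designer plays the (possibly atom-augmented) density $\tilde{g}$ of the lemma as her distribution over $s_D$, which defines a concrete BBM segmentation attaining the bound.

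To present this as a single clean expression I would argue that the claimed minimum $\min\big\{ U^*(0)/e,\; U^*(\overline{s}) \ln(U^*(0)/U^*(\overline{s})) \big\}$ correctly selects the right branch in both cases. The key computation is the monotonicity of $Val(\tau) = U^*(\tau)\ln(U^*(0)/U^*(\tau))$, already carried out in the discussion preceding the proposition: its derivative has the same sign as $\ln\big( U^*(0)/(e\,U^*(\tau)) \big)$ times $U^{*\prime}(\tau) < 0$, so $Val$ is increasing in $\tau$ for $\tau < \delta$ and equals $U^*(0)/e$ precisely at $\tau = \delta$ (where $U^*(\delta)=U^*(0)/e$ makes $\ln(U^*(0)/U^*(\delta)) = 1$). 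Hence for $\overline{s} < \delta$ one has $U^*(\overline{s})\ln(U^*(0)/U^*(\overline{s})) < U^*(0)/e$ and the second term is the minimum, while for $\overline{s} \ge \delta$ the first term $U^*(0)/e$ is the minimum; this matches the two branches of the lemma exactly.

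I do not anticipate a serious obstacle, since the heavy lifting is packaged into Lemma \ref{main_tech_lemma_restricted_adv}; the main point requiring care is the verification that the relaxation $v \le \tilde{v}$ remains valid verbatim under the truncated adversary strategy space, which it does because the relaxation is pointwise in $(s, s_D)$ and therefore unaffected by restricting the adversary's feasible set. The only other subtlety is the reduction to the correct $[\alpha, \beta]$ interval: one should note that, as in Theorem \ref{main_res_theorem}, both players may without loss assume $s, s_D$ do not exceed $b_{n-1}$, so the effective domain is $[0, b_{n-1}]$ and the constant region $[0, s^*]$ supplies the required $\alpha = s^* > 0$ strictly below $\beta = b_{n-1}$.
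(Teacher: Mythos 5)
Your proposal follows the paper's proof of Proposition \ref{main_res_prop_restricted} essentially verbatim: set up the auxiliary zero-sum game with the adversary truncated to $[0,\overline{s}]$, observe that the pointwise relaxation $v \le \tilde{v}$ from the proof of Theorem \ref{main_res_theorem} is unaffected by the truncation, and apply Lemma \ref{main_tech_lemma_restricted_adv} with $\tau=\overline{s}$, $\alpha=s^*$, $\beta=b_{n-1}$. The paper's proof is exactly this three-line reduction, so on the main line of argument you and the paper coincide.

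One flag on your final paragraph: the claim that ``for $\overline{s}\ge\delta$ the first term $U^*(0)/e$ is the minimum'' is false. Writing $C=U^*(0)$ and $z=U^*(\overline{s})$, the function $h(z)=z\ln(C/z)$ is maximized at $z=C/e$ with maximum value $C/e$; hence $h(z)\le C/e$ for \emph{every} $z$, with equality only at $z=C/e$, i.e.\ only at $\overline{s}=\delta$. So for $\overline{s}>\delta$ the second term of the minimum is \emph{strictly smaller} than $U^*(0)/e$, whereas Lemma \ref{main_tech_lemma_restricted_adv} only establishes the weaker bound $U^*(0)/e$ in that regime. What the lemma actually delivers is the piecewise bound ($U^*(0)/e$ if $\overline{s}\ge\delta$, and $U^*(\overline{s})\ln(U^*(0)/U^*(\overline{s}))$ if $\overline{s}<\delta$), not the stated minimum; the two expressions agree only for $\overline{s}\le\delta$. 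This discrepancy is inherited from the proposition's statement itself (the paper's own one-line proof glosses over it too), but your attempted reconciliation introduces an incorrect case analysis rather than noting the mismatch. The safe fix is to state the conclusion in the piecewise form, or to observe that the minimum as written is a valid upper bound only on the subinterval $\overline{s}\le\delta$.
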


\begin{proof}[\textbf{Proof of Proposition \ref{main_res_prop_restricted}.}]
    By the same arguments presented in the proof of Theorem \ref{main_res_theorem}, the overall regret is bounded from above by the value of a zero-sum game in which the designer chooses $s_D$, the adversary chooses $s$, and the adversary's utility is given by Equation \eqref{eq_thrm_1_relaxed_utils}. However, unlike in Theorem \ref{main_res_theorem} can only choose $s$ from $[0,\overline{s}]$. The result then follows by applying Lemma \ref{main_tech_lemma_restricted_adv} with $\tau = \overline{s}$.
\end{proof}

\subsection{The Case of Generalized Regret}
\label{subsec:generalized_regret}

While the standard regret minimization framework aims to minimize the worst-case difference between the buyer's optimal surplus $U^*(s)$ under known seller valuation $s$ and the surplus achieved through a market segmentation strategy $U(s, \sigma)$, 
it is also natural to consider more flexible performance criteria. Specifically, market designers may value trade-offs between optimality and robustness differently depending on the competitive landscape, regulatory constraints, or risk preferences. This motivates the consideration of the \textit{generalized regret objective}, also studied by \cite{bergemann2023managing}. For any $\lambda \in (0, 1)$, the $\lambda-$generalized regret is defined as follows:

\begin{equation*}
    R_\lambda(\sigma) \coloneqq \max_{s} \{ \lambda U^*(s) - (1 - \lambda)U(s, \sigma) \}
\end{equation*}

The overall generalized regret is also defined analogously. The tradeoff parameter $\lambda \in (0, 1)$ encodes the designer's tradeoff between maximizing the worst‐case absolute surplus (low $\lambda$) and avoiding high-benchmark scenarios (high $\lambda$). Notice that the generalized regret boils down to the standard regret for $\lambda = \frac{1}{2}$ (up to multiplying the regret by a constant of $2$).

We note that the proof technique used to bound the overall standard regret from above carries over to the case of generalized regret as well. First, we introduce the following adaptation of the technical lemma to the generalized case (the proof is deferred to Appendix \ref{app:l2_general}):

\begin{lemma}
    \label{main_tech_lemma_general_reg}
    Consider the zero-sum game described in Lemma \ref{main_tech_lemma} with the following utility function:
        \begin{equation*}
        v(x,y) = \left\{\begin{array}{lr}
            \lambda u(x), & \text{for } x > y \\
            \lambda u(x)-(1-\lambda)u(y), & \text{for } x \le y
            \end{array}\right\}
        \end{equation*}

    for some $\lambda \in (0,1)$. Then the value of the game is $\lambda u(0) \cdot e^{- \frac{1-\lambda}{\lambda}}$, and it can be guaranteed to Player 2 by playing a mixed strategy with the following density function:
    \begin{equation*}
        g(y) = \left\{\begin{array}{lr}
        -\frac{\lambda}{1-\lambda} \cdot\frac{u'(y)}{u(y)}, & \text{for } 
            \alpha \le y \le \delta \\
            0, & \text{otherwise}
        \end{array}\right\}
    \end{equation*}
    for $\delta \in (\alpha,\beta)$ such that $u(\delta) = u(0) \cdot e^{- \frac{1-\lambda}{\lambda}}$.
\end{lemma}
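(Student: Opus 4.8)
The plan is to mirror the structure of the proof of Lemma \ref{main_tech_lemma}, generalizing each step to account for the asymmetric weights $\lambda$ and $1-\lambda$ on the two terms. The claimed value $\lambda u(0) \cdot e^{-\frac{1-\lambda}{\lambda}}$ reduces to $\frac{u(0)}{2} \cdot e^{-1} = \frac{u(0)}{2e}$ at $\lambda = \frac{1}{2}$, which is exactly one half of $\frac{u(0)}{e}$, consistent with the remark that generalized regret at $\lambda = \frac{1}{2}$ is the standard regret scaled by $\frac{1}{2}$; this gives a reassuring sanity check to aim for. First I would verify that the proposed $g$ is a valid density on $[\alpha,\delta]$. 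Computing $\int_\alpha^\delta g(y)\,dy = -\frac{\lambda}{1-\lambda}\int_\alpha^\delta \frac{u'(y)}{u(y)}\,dy = \frac{\lambda}{1-\lambda}\bigl(\ln u(\alpha) - \ln u(\delta)\bigr)$, and since $u$ is constant on $[0,\alpha]$ we have $u(\alpha)=u(0)$, so this equals $\frac{\lambda}{1-\lambda}\ln\frac{u(0)}{u(\delta)}$. Plugging in the defining relation $u(\delta)=u(0)e^{-\frac{1-\lambda}{\lambda}}$ gives $\ln\frac{u(0)}{u(\delta)} = \frac{1-\lambda}{\lambda}$, so the integral is exactly $1$, confirming $g$ integrates to one.

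Next I would show that this strategy $g$ guarantees Player 2 a loss of at most the claimed value against every pure strategy $x$ of Player 1, i.e. bound $\E_{y\sim g}[v(x,y)]$ from above uniformly in $x$. The computation splits according to where $x$ sits relative to the support $[\alpha,\delta]$. For $x \in [\alpha,\delta]$, the expected payoff is
\begin{equation*}
\int_\alpha^x \lambda u(x)\,g(y)\,dy + \int_x^\delta \bigl(\lambda u(x) - (1-\lambda)u(y)\bigr)g(y)\,dy,
\end{equation*}
and substituting $g(y) = -\frac{\lambda}{1-\lambda}\frac{u'(y)}{u(y)}$ into the term $\int_x^\delta (1-\lambda)u(y)g(y)\,dy = -\lambda\int_x^\delta u'(y)\,dy = \lambda(u(x)-u(\delta))$ collapses the $u(y)$ dependence, exactly the cancellation phenomenon highlighted after Lemma \ref{main_tech_lemma}. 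The remaining terms combine so that the expected payoff becomes independent of the particular shape of $u$ on $[\alpha,\delta]$ and equals the constant value. For $x < \alpha$ (where $u(x)=u(0)$) and for $x > \delta$, separate elementary bounds show the payoff does not exceed the value; the monotonicity of $u$ does the work here. Symmetrically, I would exhibit an optimal (or near-optimal) mixed strategy for Player 1 guaranteeing at least the value, establishing that the quantity is indeed the game value rather than just an upper bound on Player 2's guarantee.

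The main obstacle I anticipate is the boundary case analysis for Player 1's deviations, specifically the region $x < \alpha$. Because $u$ is merely constant on $[0,\alpha]$ and the atom-free density $g$ is supported on $[\alpha,\delta]$, a player-1 choice of $x$ below $\alpha$ always lands in the regime $x \le y$, yielding payoff $\lambda u(0) - (1-\lambda)u(y)$; one must verify that averaging this against $g$ does not exceed the value, which amounts to checking $\lambda u(0) - (1-\lambda)\E_{y\sim g}[u(y)] \le \lambda u(0)e^{-\frac{1-\lambda}{\lambda}}$, i.e. that $\E_{y\sim g}[u(y)]$ is large enough. Evaluating $\E_{y\sim g}[u(y)] = -\frac{\lambda}{1-\lambda}\int_\alpha^\delta u'(y)\,dy = \frac{\lambda}{1-\lambda}(u(0)-u(\delta))$ and substituting the value of $u(\delta)$ reduces this to a clean inequality in $\lambda$, so the apparent difficulty dissolves into verifying a single scalar inequality. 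I expect the rest of the argument to follow the template of Lemma \ref{main_tech_lemma} nearly verbatim, with the factor $\frac{\lambda}{1-\lambda}$ propagating through the density and the exponent $\frac{1-\lambda}{\lambda}$ appearing wherever the constant $1$ (and hence the base point $\frac{1}{e}$) previously stood.
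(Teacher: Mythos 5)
Your proposal is correct and follows essentially the same route as the paper's proof: both generalize the argument of Lemma \ref{main_tech_lemma} by carrying the factor $\frac{\lambda}{1-\lambda}$ through the density and the normalization constraint, and your direct verification that $\E_{y\sim g}[v(x,y)]=\lambda u(\delta)$ for all $x\le\delta$ (with the $x<\alpha$ case collapsing to an identity rather than a strict inequality) is just the paper's indifference condition run in the checking direction. The only part you leave unexecuted is Player 1's optimal strategy, which in the paper is the same template as before, $F(y)=u(\delta)/u(y)$ on $[\alpha,\delta]$ with an atom at zero, and poses no difficulty.
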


Notice that setting $\lambda = \frac{1}{2}$ indeed recovers the result of Lemma \ref{main_tech_lemma}, up to the missing multiplicative factor of $2$ in the game value. We can now use Lemma \ref{main_tech_lemma_general_reg} to derive an upper bound on the overall $\lambda-$generalized regret:

\begin{proposition}
\label{main_res_prop_general}
    The overall $\lambda-$generalized regret is bounded from above by $\lambda U^*(0) \cdot e^{- \frac{1-\lambda}{\lambda}}$, and this bound is attained by a BBM market segmentation.\footnote{To get the same scale as in Theorem \ref{main_res_theorem}, one should multiply the overall regret by $2$.}
\end{proposition}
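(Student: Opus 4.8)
The plan is to mirror the proof of Theorem~\ref{main_res_theorem} exactly, substituting the generalized-regret technical lemma (Lemma~\ref{main_tech_lemma_general_reg}) for the standard one (Lemma~\ref{main_tech_lemma}) at the final step. First I would set up the auxiliary zero-sum game in which the adversary chooses the true seller valuation $s$ and the designer chooses a guess $s_D$ and then plays the corresponding optimal segmentation $\sigma^*(s_D)$. The adversary's payoff in this game is the generalized regret term
\begin{equation*}
    v(s,s_D) = \lambda U^*(s) - (1-\lambda) U(\sigma^*(s_D),s).
\end{equation*}
Since the designer is restricted to BBM segmentations here, the value of this game is an upper bound on the overall $\lambda$-generalized regret.

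Next I would reproduce the two bounds used in Theorem~\ref{main_res_theorem} to replace $v$ by a cleaner dominating function. From nonnegativity of buyer surplus, $v(s,s_D) \le \lambda U^*(s)$ for all $s,s_D$. From the monotonicity of $U(\sigma,\cdot)$ in $s$ (for any fixed $\sigma$), whenever $s \le s_D$ we have $U(\sigma^*(s_D),s) \ge U(\sigma^*(s_D),s_D) = U^*(s_D)$, so $v(s,s_D) \le \lambda U^*(s) - (1-\lambda)U^*(s_D)$. Combining these two cases gives that $v$ is bounded above by
\begin{equation*}
    \tilde v(s,s_D) = \left\{\begin{array}{lr}
        \lambda U^*(s), & \text{for } s > s_D \\
        \lambda U^*(s) - (1-\lambda)U^*(s_D), & \text{for } s \le s_D
    \end{array}\right\}
\end{equation*}
which is precisely the payoff structure of Lemma~\ref{main_tech_lemma_general_reg} with $u = U^*$. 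As in the earlier proof, I would note that strategies above $b_{n-1}$ can be eliminated for both players without affecting the value (the adversary gets $0$ there, and the designer's guess $s_D > b_{n-1}$ is equivalent to $b_{n-1}$), so Lemma~\ref{key_prop_lemma} guarantees that $U^*$ satisfies the hypotheses required by Lemma~\ref{main_tech_lemma_general_reg}.

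Finally I would invoke Lemma~\ref{main_tech_lemma_general_reg} directly: its conclusion is that the value of the game defined by $\tilde v$ equals $\lambda U^*(0)\cdot e^{-\frac{1-\lambda}{\lambda}}$, and that this value is guaranteed to the designer (Player~2) by the stated hazard-type density. Since the value of the $\tilde v$-game dominates the value of the $v$-game, which in turn upper-bounds the overall $\lambda$-generalized regret, the bound follows; and because the optimal Player~2 strategy corresponds to an explicit distribution over guesses $s_D$, the bound is attained by a BBM segmentation. I expect no genuine obstacle here, since all the analytic work has been front-loaded into Lemma~\ref{main_tech_lemma_general_reg}; the only point requiring minor care is verifying that both the nonnegativity bound and the monotonicity bound on $U(\sigma^*(s_D),\cdot)$ carry over verbatim when the two surplus terms are weighted asymmetrically by $\lambda$ and $1-\lambda$, but since these weights are positive constants the inequalities are preserved term by term.
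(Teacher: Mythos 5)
Your proposal is correct and follows essentially the same route as the paper's own proof: the same auxiliary zero-sum game restricted to BBM strategies, the same two dominating inequalities (nonnegativity of surplus and monotonicity of $U(\sigma,\cdot)$, each preserved under the positive weights $\lambda$ and $1-\lambda$), the same elimination of strategies above $b_{n-1}$, and the final appeal to Lemma~\ref{main_tech_lemma_general_reg}. No gaps to report.
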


\begin{proof}[\textbf{Proof of Proposition \ref{main_res_prop_general}.}]

Consider the same zero-sum game between the designer and the adversary (as in the proof of Theorem \ref{main_res_theorem}), except now the adversary's utility is given by:

\begin{equation*}
    v(\us,s_D) = \lambda U^*(\us) - (1-\lambda) U(\sigma^*(s_D),\us)
\end{equation*}

The value of this game is an upper bound on the overall $\lambda-$generalized regret, by similar arguments as in Theorem \ref{main_res_theorem}. By similar argument as in the proof of Theorem \ref{main_res_theorem}, we bound the utility from above by the following term:

\begin{equation*}
    \tilde{v}(\us,s_D) = \left\{\begin{array}{lr}
        \lambda U^*(\us), & \text{for } \us > s_D \\
        \lambda U^*(\us)-(1-\lambda) U^*(s_D), & \text{for } \us \le s_D
        \end{array}\right\}
\end{equation*}

The value of this game is, by the exact same argument, an upper bound on the overall $\lambda-$generalized regret. Using the same arguments as in the proof of Theorem \ref{main_res_theorem}, we can eliminate strategies $s, s_D > b_{n-1}$, which leave us with a game that falls into the conditions specified in Lemma \ref{main_tech_lemma_general_reg}. Applying the lemma then concludes the proof.

\end{proof}

As $\lambda$ increases, the designer’s objective puts ever more weight on the benchmark term $U^*(s)$, the surplus she could only achieve if she actually knew $s$. However, since she cannot influence $U^*(s)$ through her choice of $\sigma$, a larger $\lambda$ effectively magnifies a source of “regret” that lies entirely outside her control. In the extreme, as $\lambda \to 1$, the worst‐case generalized regret approaches $\max_s U^*(s) = U^*(0)$, which is the maximum possible benchmark value no matter how clever the segmentation is. Thus it is inevitable (and intuitive) that the minimal achievable $R_\lambda$ rises sharply with $\lambda$: by demanding closer alignment with an unattainable ideal, the designer is forced to bear larger residual loss in every scenario. 
While Proposition \ref{main_res_prop_general} only provides an upper bound, it is reasonable to view a similar behavior as we could expect from the actual regret.
Figure \ref{fig:Rlambda} illustrates the behavior of the generalized regret upper bound as a function of the tradeoff parameter $\lambda$ (for the case where $U^*(0)$ is normalized to $1$).

\begin{figure}[t]
  \centering
  \begin{tikzpicture}
    \begin{axis}[
      width=0.5\textwidth,
      xlabel={\(\lambda\)},
      ylabel={Upper bound on \(R_\lambda\)},
      xmin=0, xmax=1,
      ymin=0, ymax=1,
      samples=200,
      domain=0.01:0.99,
      xtick=\empty,
      ytick=\empty,
      axis lines = left,
    ]
      \addplot [thick, orange] {x * exp(-(1 - x) / x)};
    \end{axis}
  \end{tikzpicture}
  \caption{The upper bound on the overall $\lambda-$generalized regret as a function of $\lambda$, for \(U^*(0)=1\).}
  \label{fig:Rlambda}
\end{figure}

\section{Discussion}
\label{conclusions_section}
This work studies the celebrated price discrimination problem of \cite{bergemann2015} under the relaxation of a major assumption of complete information about the valuation of the seller. 
We adopt the robust perspective, according to which the designer aims to achieve a segmentation that guarantees a low regret, i.e., a segmentation that performs well (compared to the optimal one) regardless of the actual seller valuation.
Our main results suggest that our two-stage approach of sampling a seller valuation and acting \emph{as if} this was the true seller valuation, obtains an upper bound on the overall regret. We further show that this bound is indeed tight in the binary buyer type case. Lastly, we demonstrated that in some realistic markets, our approach yields a regret that is even better than the worst-case guarantee. 

We argue that many realistic applications of third-degree price discrimination have the property of partial (or even completely no) information of the seller's valuation. As an example, consider an online retail platform, such as Amazon or eBay, that can control the information available to the seller about potential buyers by selective presentation of users' information. Such platforms may be interested in preserving the average satisfaction of their uses, which translates into maximizing the buyers' surplus in our price discrimination model.
In this scenario, it is unreasonable to assume that the market designer knows exactly how much the seller appreciates her product. Our approach enables the platform to achieve a great level of user satisfaction regardless of the actual seller type, which may be considered an extremely strong guarantee.

Interestingly, our incomplete information model of price discrimination can also be reinterpreted as a framework for dealing with market shocks. In this alternative view, the seller's outside option is fixed and normalized to zero, while the designer only knows the buyer valuations up to an additive aggregated noise level, $\us$. This noise, which encapsulates unforeseen market fluctuations, remains unknown until after the platform has committed to its signaling policy and is subsequently revealed to all agents. In this setting, the designer’s objective is to craft a market segmentation that performs robustly across all possible realizations of $\us$. Such a model is particularly relevant in scenarios where external factors—such as sudden shifts in consumer sentiment, seasonal variations, or broader economic disturbances—induce aggregate changes in buyer valuations. By incorporating the potential impact of these market shocks into the design process, our approach ensures that the market segmentation remains effective even under significant uncertainty, thereby broadening its applicability to a variety of real-world market environments.

A natural question is whether the designer can learn something about the seller’s valuation ex post, after observing the interaction outcome. A key observation is that if the designer applies a BBM segmentation based on a hypothesized seller value \(s_D\), then she can infer whether the true valuation \(s\) lies above or below \(s_D\), based on the price the seller chooses (at least when the realized posterior induces price options both below and above the monopolistic price).\footnote{Following \cite{bergemann2015}, in any optimal segmentation for a given $s_D$, the seller is indifferent between the lowest price in the segment and the monopolistic price $\pi(\mu; s)$. Thus, if the designer applies a BBM segmentation for $s_D$ and observes both the signal realization and the seller's chosen price, she can infer whether $s \le s_D$ (seller chooses the lowest price) or $s > s_D$ (seller prefers a higher price).} 
This observation can be useful, for instance, in repeated interactions with a myopic seller (whose valuation is unknown at the outset), who sets price greedily based on the current round's realized segment. By iteratively applying BBM segmentations and updating her knowledge based on outcomes, the designer can exponentially narrow in on \(s\) while still achieving bounded regret in each round (leveraging the results of Subsection \ref{subsec: restricted_seller_val}).

\paragraph{Future directions} 
Providing a lower bound on the overall regret for an arbitrary number of buyer types is left as an interesting future direction.
Another direction is the study of multiplicative regret (or "competitive ratio") in the price discrimination model. Notably, in the case of multiplicative regret, the utilities in the resulting zero-sum game will contain terms of the form $u(x) / u(y)$, which means that our technical lemma cannot be easily adjusted. The reason is that a core property of the resulting game in the case of additive regret is the additivity of $u(x)$ and $u(y)$, which enables the players to cancel out each other dependencies on the behavior of $u$ in the entire interval of the strategy space.

The complexity of computing the regret-minimizing value, and relatedly the regret-minimizing policy, remains an interesting open problem.
Regarding the problem of computing an exact regret-minimizing policy: Since we cannot bound the number of signals required in a regret-minimizing policy (it can be infinite even in the case of two valuations), it remains unclear whether such policies admit an alternative, succinct representation. We do not see a reason to believe it does in the general case. Thus, we do not see a reason that the problem will belong to NP.
This leads us to consider approximations. The problem of computing an $\epsilon$-regret minimizing policy can be solved via an LP after discretizing the signaling space. However, the size of the LP in this (straightforward) procedure grows exponentially with the number of different valuations. It remains an interesting open problem whether more clever algorithms can compute approximate regret-minimizing policies in time that depends polynomially on the number of valuations.
In a different setting of forecast aggregation, in which the regret minimization problem boils down to a zero-sum game with infinite-dimensional action spaces, as it is in our setting, the computational aspects of the problem have been considered in \cite{guo2024algorithmic}. We believe that the techniques of \cite{guo2024algorithmic} might be beneficial to deduce computational insight, also in our case.

Finally, it is natural to consider an alternative setting in which the designer offers a menu of segmentations to the seller. \cite{SYnew} adopt this approach in the case of a seller who holds private partial information about the buyer's valuation. They show that this private information cannot be screened. Whether the seller's value for the good can be screened in our model remains an interesting follow-up problem.

\section*{Acknowledgements}
Itai Arieli gratefully acknowledges support from the Israel Science Foundation (grant agreements \#2029464 and \#2071717).
Yakov Babichenko gratefully acknowledges support from the Bi-national Science Foundation (NSF-BSF grant \#2021680) and the Israel Science Foundation (grant agreement \#2061/24).
The work by Moshe Tennenholtz and Omer Madmon was supported by funding from the European Research Council (ERC) under the European Union’s Horizon 2020 research and innovation programme (grant agreement \#740435).
We would like to thank the associate editor and the anonymous reviewers for their helpful comments. An abstract of this paper appeared in the proceedings of Theoretical Aspects of Rationality and Knowledge (TARK) 2025.

\section*{Declaration of Generative AI Usage in the Writing Process}
During the preparation of this work, the authors used ChatGPT for proofreading. After using this tool, the authors reviewed and edited the content as needed. The authors take full responsibility for the content of the publication.

\section*{Declaration of Competing Interest}
The authors declare that they have no known competing financial interests or personal relationships that could have appeared to influence the work reported in this paper.

\section*{Data Availability}
No data was used for the research described in the article.

\bibliographystyle{plainnat}
\bibliography{references}

\appendix

\section{Omitted Proofs}

\subsection{Proof of Lemma \ref{key_prop_lemma}}\label{app:l1}

\begin{proof}
    
First, notice that $U^*(\cdot)$ satisfies absolute continuity (as a maximum of linear functions on a closed interval), and non-negativity (by definition). As for its differentiability and monotonicity properties, we first recall that the optimal market segmentation $\sigma^*(\us)$ takes the following form: if $b_1 = \pi(\mu;\us)$ then the optimal market segmentation consists of the prior buyer distribution solely, i.e. $\sigma^*(\mu) = 1$. Otherwise, it consists of a set of at most $n$ posteriors, such that at any posterior $p$ the seller is indifferent between the monopolistic price $\pi(\mu;\us)$ and the lowest buyer type in the support of $p$. Since the optimal price $\pi(\mu;\us)$ is weakly increasing in $\us$, we get that there must exist some $s^*$ such that no segmentation is optimal if and only if $\us < s^*$. 

Next, note that $U^*(\cdot)$ is differentiable up to a finite number of points, corresponding to the set of points for which the seller is indifferent between several prices.
It can be now seen from Equation \eqref{eq: optimal surplus} that whenever no segmentation is optimal (namely $\us < s^*$, which means $i^* = 1$), $U^*(\cdot)$ is constant and equals $\sum_{j=1}^n \mu_j \cdot (b_j - b_1)$. 

Next, assume that $s^* \le \us < b_{n-1}$, and in particular $i^*>1$. It is clear that in this case $i^* < n$ (since setting the price $b_n$ yields zero utility for the seller, while setting it to e.g. $b_{n-1}$ yields some positive utility, since buyers of type $b_n$ will buy the product). In this case, the optimal buyer surplus can be written as follows:

\begin{equation*}
    U^*(\us) = \sum_{j<i^*} \mu_j \cdot \max \{ b_j - \us, 0 \} + \sum_{j \ge i^*} \mu_j \cdot \bigr( \max \{ b_j - \us, 0 \} - (b_{i^*} - \us) \bigl)
\end{equation*}

For all $j \ge i^*$ it holds that $b_j \ge b_{i^*} \ge \us$, and therefore $\max \{ b_j - \us, 0 \} - (b_{i^*} - \us) = b_j - b_{i^*}$. Hence, the rightmost sum is independent of $\us$. In addition, there exists at least one $j < i^*$ for which $b_j > \us$ 
and the leftmost sum is strictly decreasing as a sum of strictly decreasing functions.
Finally, it is clear that when $\us \ge b_{n-1}$, the optimal price corresponds to the highest buyer type (namely, $i^* = n$), and $U^*(\us) = 0$.

\end{proof}

\subsection{Proof of Lemma \ref{main_tech_lemma}}\label{app:l2}

\begin{proof}

Let us consider a mixed strategy profile, in which player 1 chooses a distribution with density $f$ and CDF $F$, and player 2 chooses a distribution with density $g$, both with support $[\alpha,\delta]$ for some $\alpha < \delta < \beta$ (where the distribution $f$ also has an atom at zero). Then, $(f,g)$ is a mixed Nash equilibrium if the following indifference conditions hold:

\begin{enumerate}
    \item $v(x,g)$ is independent of $x$.
    \item $v(f,y)$ is independent of $y$, for $y\in[\alpha,\delta]$.
    \item $v(f,y) \ge v(f,\delta)$ for $y\notin[\alpha,\delta]$.
\end{enumerate}

To satisfy the first condition, we require:

\begin{equation*}
    \frac{\partial v(x,g)}{\partial x} = 0
\end{equation*}

Since $v(x,g) = u(x) - \int_{y=x}^\delta g(y) u(y) dy$, the above condition holds if and only if:

\begin{equation*}
    u'(x) + g(x) u(x) = 0 \Leftrightarrow g(x) = - \frac{u'(x)}{u(x)}
\end{equation*}

Notice that indeed $g \ge 0$, since $u$ is non-increasing and nonnegative. Now, $\delta$ can be found using the normalization constraint of the distribution $g$:

\begin{equation*}
    1 = \int_{x=\alpha}^\delta g(x) dx = - \int_{x=\alpha}^\delta \frac{u'(x)}{u(x)} \cdot dx \underset{(1)}{{=}} \ln(u(\alpha)) - \ln(u(\delta))
\end{equation*}
\begin{equation*}
    \Rightarrow \ln(u(\delta)) = \ln(u(\alpha)) - \ln(e) = \ln\Bigr(\frac{u(\alpha)}{e}\Bigr) \Rightarrow u(\delta) = \frac{u(\alpha)}{e} \underset{(2)}{{=}} \frac{u(0)}{e}
\end{equation*}

where $(1)$ follows from the Newton-Leibniz formula (and the absolute continuity of $u$), and $(2)$ follows from the fact that $u$ is constant in range $[0,\alpha]$. Notice that $\delta$ is indeed contained in the open interval $(\alpha,\beta)$.\footnote{To see why, note that
$\ln(u(\alpha)) - \ln(u(\delta)) \to \infty$ as $\delta \to \beta$, and
$\ln(u(\alpha)) - \ln(u(\delta)) \to 0$ as $\delta \to \alpha$.
Hence, from continuity, there must exist some $\delta \in (\alpha, \beta)$ for which $\ln(u(\alpha)) - \ln(u(\delta))=1$.
}

As for the second condition, note that for any $y\in[\alpha,\delta]$:

\begin{equation*}
    v(f,y) = \E_{x \sim f} [u(x)] - F(y) u(y)
\end{equation*}

The condition holds for $F(y) = \frac{c}{u(y)}$ for some constant $c > 0$ (note that the distribution has an atom at zero). Note that $F$ is a valid CDF since $u$ is strictly decreasing in $[\alpha,\delta]$. Now it is left to find $c$ for which this indifference holds for any $y\in[\alpha,\delta]$:

\begin{equation*}
    F(\delta) = 1 \Leftrightarrow c = u(\delta)
\end{equation*}

If $y < \alpha$, it holds that $F(y) = 0$ and clearly $v(f,y)$ increases - hence it is not beneficial for player 2 who aims to minimize $v$. Lastly, note that for $y > \delta$, the term $F(y) u(y)$ decreases, hence $v(f,y)$ increases - and therefore player 2 does not assign a positive probability for any $y\notin[\alpha,\delta]$ when player 1 plays $f$.

Overall, we obtain that $(f,g)$ is a mixed Nash equilibrium, and the value of the game is $u(\delta) = \frac{u(0)}{e}$, as $\delta$ is the highest action played by player 2 with positive probability, and player 1 is indifferent and might as well play $\delta$ with probability 1. In that case, $x>y$ with probability $1$, and hence the value is $u(\delta)$.

\end{proof}

\subsection{Proof of Lemma \ref{bbm_lemma}}

\begin{proof}
\label{app:3}

First, it is clear that if $\us > b_1$ the seller never sells the product regardless of the segmentation, and therefore the buyer surplus is always zero. Assume now that this is not the case. Consider the equivalent market segmentation problem without seller valuation, and with buyer types $\tilde{b}_i \coloneqq b_i - \us$ for $i \in \set{1,2}$. From \cite{bergemann2015}, if $\tilde{b}_1$ is an optimal price in $(\mu, 1-\mu)$, then no segmentation is optimal, and the buyer surplus is simply $1-\mu$. This happens if and only if:

\begin{equation*}
    \tilde{b}_1 \ge \tilde{b}_2 (1-p) \Leftrightarrow \us \le b_2 - \frac{1}{\mu}
\end{equation*}

Otherwise, a segmentation that maximizes the buyer surplus is of the following form:

\begin{equation*}
\begin{split}
    (0,1) \text{ w.p. } \alpha, \\
    (p,1-p) \text{ w.p. } 1-\alpha 
\end{split}
\end{equation*}

such that at the posterior $(p,1-p)$ the seller is indifferent between prices $\tilde{b}_1$ and $\tilde{b}_2$, and Bayes plausibility holds. The buyer surplus is $(1-p)(1-\alpha)$. The seller's indifference condition yields:

\begin{equation}
\label{bbm_indf}
    \tilde{b}_1 = \tilde{b}_2 (1-p) \Rightarrow 1-p = \frac{\tilde{b}_1}{\tilde{b}_2}
\end{equation}

and from the Bayes plausibility condition:
\begin{equation}
\label{bbm_bp}
    x(1-\alpha) = \mu \Rightarrow 1-\alpha = \frac{\mu}{x}
\end{equation}

Combining Equation \eqref{bbm_indf} and Equation \eqref{bbm_bp}, we get:

\begin{equation*}
    (1-p)(1-\alpha) = \frac{\tilde{b}_1}{\tilde{b}_2} \cdot \frac{\mu}{1 - \frac{\tilde{b}_1}{\tilde{b}_2}} = \tilde{b}_1 \mu = (b_1 - \us) \mu
\end{equation*}

Finally, notice that for $\us = b_2 - \frac{1}{\mu}$, $(b_1 - \us) \mu = 1 - \mu$.
    
\end{proof}

\subsection{Proof of Lemma \ref{lemma:equal_revenue_segments}}
\label{app:proof_of_lemma_eq_rev}

\begin{proof}
Assume $p \in \Delta(B)$ is an equal revenue segment with respect to $s_D$, i.e., for all $i, j \in [n]$, it holds that:
\[
(b_i - s_D) \sum_{k=i}^{n} p_k = (b_j - s_D) \sum_{k=j}^{n} p_k.
\]

Let $F_i = \sum_{k=i}^{n} p_k$ for each $i \in [n]$.

\textbf{Case I:} $s = s_D + \delta$ for some $\delta > 0$.  
Let $j$ be the index of $\max \supp(p)$.  
We compare the revenue from two prices $b_i$ and $b_j$:
\begin{align*}
(b_i - s) F_i - (b_j - s) F_j 
&= (b_i - s_D - \delta) F_i - (b_j - s_D - \delta) F_j \\
&= (b_i - s_D) F_i - \delta F_i - (b_j - s_D) F_j + \delta F_j \\
&= \left[ (b_i - s_D) F_i - (b_j - s_D) F_j \right] + \delta (F_j - F_i).
\end{align*}

By the equal revenue property, the first term is zero. Since $j$ is the maximal index in the support, $F_j < F_i$ for $i < j$, so $F_j - F_i < 0$, and hence the expression is negative.  
Thus, the optimal price is $b_j = \max \supp(p)$.

\textbf{Case II:} $s = s_D - \delta$ for some $\delta > 0$ (it is clear that if $s=s_D$, then the optimal price remains unchanged).  
Let $j$ be the index of $\min \supp(p)$.  
Similarly, for any $i > j$:
\begin{align*}
(b_i - s) F_i - (b_j - s) F_j 
&= (b_i - s_D + \delta) F_i - (b_j - s_D + \delta) F_j \\
&= (b_i - s_D) F_i + \delta F_i - (b_j - s_D) F_j - \delta F_j \\
&= \left[ (b_i - s_D) F_i - (b_j - s_D) F_j \right] + \delta (F_i - F_j).
\end{align*}

Again, the first term is zero by the equal revenue condition. Since $j$ is the minimal index in the support, $F_j > F_i$, so $F_i - F_j < 0$, and the expression is negative.  
Hence, the optimal price is $b_j = \min \supp(p)$.

\end{proof}

\subsection{Proof of Lemma \ref{main_tech_lemma_restricted_adv}}\label{app:l2_restricted}

\begin{proof}
    The proof relies on notations and arguments analogous to those presented in the proof of Lemma \ref{main_tech_lemma} (Appendix \ref{app:l2}). 
    We will refer to the proof of Lemma \ref{main_tech_lemma} as the "original proof", as many of its arguments will also be applied here.
    First, notice that if $\tau \ge \delta$, both player can still play their optimal strategies presented in Lemma \ref{main_tech_lemma}, therefore the value of the game remains $\frac{u(0)}{e}$. It is therefore left to consider the case where $\alpha < \tau < \delta$, an optimal strategy of player 2 that is supported on $[\alpha,\tau]$ must have a density $-u'(x)/u(x)$ in order for the indifference condition of player 1 to hold. By the definition of the utilities in the game, it is also clear that placing the remaining mass $1-G(\tau)$ on $\tau$ preserves this indifference.
    As for the optimal strategy of player 1, notice that by similar arguments as in the original proof, the optimal strategy must be of the form $F(y) = \frac{c}{u(y)}$, with the constant $c$ that is determined by the normalization constraint now varies as both optimal strategies are now supported on $[\alpha,\tau]$.

    As for the value of the game, notice that by similar arguments as in the original proof, it equals $v(\tau,\tilde{g})$. (where $\tilde{g}$ is the optimal strategy of player 2). A simple calculation yields:

    \begin{equation*}
        Val(\tau) = \int_{y=\alpha}^\tau v(\tau,y) \cdot g(y) \cdot dy + \big(1-G(\tau)\big) \cdot v(\tau,\tau) = u(\tau) \cdot G(\tau)
    \end{equation*}
    \begin{equation*}
        \Rightarrow Val(\tau) = u(\tau) \bigg[ \ln\big(u(0)\big) - \ln\big(u(\tau)\big) \bigg] = u(\tau) \cdot \ln\bigg( \frac{u(0)}{u(\tau)} \bigg)
    \end{equation*}
    
\end{proof}

\subsection{Proof of Lemma \ref{main_tech_lemma_general_reg}}\label{app:l2_general}

\begin{proof}
    The proof relies on notations and arguments analogous to those presented in the proof of Lemma \ref{main_tech_lemma} (Appendix \ref{app:l2}). 
    We will refer to the proof of Lemma \ref{main_tech_lemma} as the "original proof", as many of its arguments will also be applied here.
    For the indifferent condition of player 1, we get:

    \begin{equation*}
    \lambda u'(x) + (1-\lambda) g(x) u(x) = 0 \Leftrightarrow g(x) = - \frac{\lambda u'(x)}{(1-\lambda) u(x)}
\end{equation*}

    By similar arguments as in the original proof, $ g$ is still non-negative. Finding $\delta$ (which now depends on $\lambda)$ using the normalization constraint yields:

    \begin{equation*}
    1 = \int_{x=\alpha}^\delta g(x) dx = - \frac{\lambda}{1-\lambda} \int_{x=\alpha}^\delta \frac{u'(x)}{u(x)} \cdot dx = \frac{\lambda}{1-\lambda} \bigg( \ln(u(\alpha)) - \ln(u(\delta)) \bigg)
    \end{equation*}
    \begin{equation*}
        \Rightarrow \ln(u(\delta)) = \ln(u(\alpha)) - \ln\big(e^\frac{1-\lambda}{\lambda}\big) \Rightarrow u(\delta) = u(0) \cdot e^{- \frac{1-\lambda}{\lambda}}
    \end{equation*}

    For the indifference conditions of player 2, we still get (by the same arguments as in the original proof) that $F(y) = \frac{c}{u(y)}$ for some constant $c > 0$ (which also depends on $\lambda$), again with an atom on zero.
    Applying the exact same arguments as in the original proof yields that $c = u(\delta)$, and that player 2 has no beneficial deviation to strategies outside the support of $f$.
    We therefore obtain that $(f,g)$ is a mixed Nash equilibrium, and the value of the game is $\lambda u(\delta)$, by the same argument as in the original proof.
    
    \end{proof}

\end{document}